\begin{document}
\title{An improved bound for the price of anarchy for related machine scheduling}
%
\author{Andr{\'e} Berger\orcidID{0000-0002-6409-1963} \and Arman Rouhani \Letter\orcidID{0000-0003-4822-484X} \and Marc Schr{\"o}der\orcidID{0000-0002-0048-2826}}
%

\institute{Department of Quantitative Economics, Maastricht University, The Netherlands 
\email{\{a.berger,a.rouhani,m.schroder\}@maastrichtuniversity.nl}}

\maketitle              

\begin{abstract}
In this paper, we introduce an improved upper bound for the efficiency of Nash equilibria in utilitarian scheduling games on related machines. The machines have varying speeds and adhere to the Shortest Processing Time (SPT) policy as the global order for job processing. The goal of each job is to minimize its completion time, while the social objective is to minimize the sum of completion times. Our main result provides an upper bound of $2-\frac{1}{2\cdot(2m-1)}$ on the price of anarchy for the general case of $m$ machines. We improve this bound to 3/2 for the case of two machines, and to $2-\frac{1}{2\cdot m}$ for the general case of $m$ machines when the machines have divisible speeds.

\keywords{Price of anarchy  
\and Scheduling games \and SPT fixed ordering}
\end{abstract}
\section{Introduction}
Resource allocation is a fundamental and extensively studied problem in the domains of Operations Research and Economics. The aim of multi-agent resource allocation \cite{chevaleyre2005issues} is to achieve an efficient allocation of resources among self-interested agents. In numerous real-world scenarios, such as task assignments to processors, runway utilization by airplanes, vehicle parking, or allocation of classrooms for teaching purposes, the number of activities often exceeds the available resources. Consequently, a critical requirement for a scheduling mechanism that effectively allocates users to these limited resources arises. Inefficiency can arise due to the allocation procedure itself. Traditionally, a central decision-maker possesses complete information about the problem and computes an allocation that optimizes a specific objective. However, as the world witnesses a surge in distributed resources, there is an escalating demand for decentralized allocation procedures. In such systems, individual units tend to prioritize their own objectives over the overall system performance, making them more susceptible to strategic behavior. The concept of the \textit{price of anarchy} was introduced by Koutsoupias and Papadimitriou \cite{koutsoupias2009worst} to provide a standard framework for analyzing the inefficiency of equilibria. Since its inception, this concept has been widely adopted in various decentralized optimization problems, including network routing and scheduling. It serves as a valuable tool for quantifying the potential losses in system efficiency that arise when self-interested agents make independent decisions.

The price of anarchy is extensively used for measuring the inefficiency of scheduling games. The existing literature delves into diverse problem variants, encompassing a range of objectives such as minimizing the \textit{makespan} or \textit{total completion time} (egalitarian and utilitarian social cost functions \cite{myerson1981utilitarianism}). Additionally, it explores strategic behaviors exhibited by machines, as well as various job processing order policies employed by these machines. By examining these multiple dimensions, researchers gain valuable insights into the intricacies of scheduling systems. Extensive research has been conducted in the literature concerning the makespan objective, which focuses on minimizing the maximum load of machines. Notable works such as \cite{yu2010price,azar2015optimal,christodoulou2009coordination} have provided valuable insights in this area, also tight bounds for different versions of the problem are obtained \cite{awerbuch2004tradeoffs,czumaj2007tight,gairing2010computing}. Furthermore, there is a significant body of literature dedicated to minimizing the sum of completion times objective \cite{bhattacharya2014coordination,cole2015decentralized,correa2012efficiency,rahn2013bounding}, where the completion time of each job depends on the jobs processed earlier than it. As a result, the scheduling order on each machine becomes a crucial factor \cite{vijayalakshmi2021scheduling}. 

We are interested  in a version of the scheduling game where there are $n$ jobs that need to be processed on $m$ related machines with different speeds, where each job acts selfishly and aims to minimize its own completion time. The objective of the problem is to minimize the total sum of completion times of jobs where the machines obey a global priority policy for scheduling the jobs which is based on the \textit{Shortest Processing Time} first rule (a.k.a. Smith's rule). This is a continuation of the research line of Hoeksma and Uetz \cite{hoeksma2019price} where they proved that the price of anarchy lies in the interval of $[\frac{e}{e-1},2]$. The upper bound of 2 was improved to $2 - 2/(n^2 + n + nm + m)$ by Zhang et al. \cite{zhang2019improved} which asymptotically still leads to the same bound of 2 even for a fixed number of machines, and they provided a lower bound of 1.1875 for the case of two machines.  We improve the upper bound to $2-1/(2\cdot(2m-1))$ which only depends on the number of machines and to the best of our knowledge is the best current known upper bound for this problem. Moreover, we further improve this bound to $2-\frac{1}{2\cdot m}$ when the machines have divisible speeds. Lee et al. \cite{lee2012coordination} established that in the context of two machines, the price of anarchy is confined within the range of [$(3+\sqrt{3})/4$, $(1+\sqrt{5})/2$]. Strengthening this understanding, Brokkelkamp \cite{closePHD} subsequently enhanced the lower bound to 1.188. Notably, our advancements extend to refining the upper bound of the price of anarchy for this scenario to 3/2. It is shown in the literature \cite{heydenreich2007games} that the \textit{Ibarra-Kim} algorithm which is an approximation algorithm for the optimization variant of the problem, generates all pure strategy Nash equilibria. As a result, our presented bound can be viewed as an enhanced approximation ratio for this greedy algorithm. This improvement in the approximation ratio further solidifies the effectiveness of our proposed approach.

In the context of unrelated machines, previous investigations resulted in a tight bound of 4 for the price of anarchy for both weighted and unweighted settings \cite{cole2015decentralized,correa2012efficiency}. When the machines are identical, there are tight bounds of $1$ and $3/2-1/2m$ for the pure price of anarchy and mixed Nash equilibria, respectively \cite{rahn2013bounding,hoeksma2019price}. Furthermore, Brokkelkamp \cite{closePHD} investigated related machine scheduling games using the primal-dual technique and derived the same bounds as Hoeksma and Uetz \cite{hoeksma2019price}. Table \ref{table:results} summarizes the existing results for the pure price of anarchy bounds in utilitarian machine scheduling games with the sum of (weighted) completion times as the objective. In the scheduling notation $\alpha|\beta|\gamma$ introduced by Graham et al. \cite{graham1979optimization}, our problem is represented as $Q||\sum_{}^{}C_{j}$ and, in the context of a two-machine scenario, as $Q2||\sum_{}^{}C_{j}$. Additionally, $P||\sum_{}^{}C_{j}$ and $P||\sum_{}^{}w_{j}C_{j}$ delineate scenarios where machines are identical, with the latter incorporating weights $w_j$ associated with each job $j$. Finally, $R ||\sum w_j C_{j}$ signifies unrelated machine scheduling with weighted jobs.

\begin{table}
    \caption{Summary of existing results.}  \label{table:results}
  \centering
  \renewcommand{\arraystretch}{1.2}
  \begin{tabular}{|p{3cm}||c|c|c|}
    \hline
    \centering
    \multirow{2}{*}{\textbf{Problem Setting}} & \multicolumn{2}{c|}{\textbf{Pure POA}}\\
    \cline{2-3}
    & \textbf{LB} & \textbf{UB}\\
    \hline 
     $P||\sum_{}^{}C_{j}$ & 1 \cite{conway2003theory} & 1 \cite{conway2003theory}  \\ \hline 
     $P||\sum_{}^{}w_{j}C_{j}$ & $(1+\sqrt{2})/2 \approx 1.2$ \cite{kawaguchi1986worst} & $(1+\sqrt{2})/2\approx 1.2$ \cite{kawaguchi1986worst}\\ \hline
    $Q||\sum_{}^{}C_{j}$ & $e/(e-1) \approx 1.58$ \cite{hoeksma2019price} & 2 \cite{hoeksma2019price} \\     \hline 
   $Q2||\sum_{}^{}C_{j}$ & $1.188$ \cite{closePHD} & $\frac{1+\sqrt{5}}{2} \approx 1.618$ \cite{lee2012coordination} \\ \hline
     $R ||\sum w_j C_{j}$ & 4 \cite{correa2012efficiency} & 4\cite{cole2015decentralized} \\
     \hline
  \end{tabular} 
\end{table}
Section 2 begins by formally defining the problem and introducing the notations that will be utilized throughout the paper. Additionally, we provide concise summaries of the optimal schedule, the MFT algorithm, pure Nash equilibria, the price of anarchy, and the Ibarra-Kim algorithm. Furthermore, we present an alternative perspective of the problem used in \cite{closePHD} by introducing linear programming and its dual to analyze the problem. In Section 3, we present our approach, starting with the case of two machines and deriving a bound of 3/2. Subsequently, we extend our analysis to the general case and establish a bound of $2-1/(2\cdot(2m-1))$. Finally, in Section 4, we conclude our findings, highlight our contribution, and provide future directions.
\section{Preliminaries}
We are interested in finding an improved upper bound for the price of anarchy for scheduling games on related machines with Smith's rule as the global priority list for all machines. We study a variant of scheduling games that entails scheduling a collection of $n$ jobs from the set $J$ onto a set of $m$ machines from the set $M$. Denote $p_j$ as the processing time of job $j \in J$. If job $j$ is assigned to machine $i$, then the time needed to execute job $j$ on machine $i$ is equal to $p_j/s_i$, where  $s_i$ denotes the speed of machine $i$. To simplify the analysis, we normalize the speeds of the machines by dividing the speeds of all machines by the speed of the slowest machine. In this modified setting, we denote the speed of the fastest machine as $s_m$ or $s$, while keeping the speed of the slowest machine fixed at 1. Without loss of generality, we assume an ordering of the processing times as $p_1 \leq p_2 \leq \ldots \leq p_n$ and a similar ordering of the speeds as $s_1 = 1 \leq s_2 \leq \ldots \leq s_m = s$. This problem is denoted by $Q||\sum_{}^{}C_{j}$. Every job acts as a self-interested player and selects a machine for processing in order to minimize its own completion time $C_j$. Consequently, jobs consider the priority list of machines, which determines the sequence in which machines process the jobs that have chosen them. We investigate the price of anarchy resulting from a machine priority list, specifically the Shortest Processing Time or SPT order, commonly known as Smith's rule. This order arranges the jobs in non-decreasing order based on their processing time. We make the assumption that ties are resolved using a universally consistent total order across all machines. The strategy set for each job is equal to the set of machines $M$, indicating that the players can assign their jobs to every single machine. The strategies of the jobs are represented by a vector $\tau=(\tau_j)_{j\in J}$, where $\tau_j$ denotes the machine chosen by job $j$. The cost incurred by jobs in a strategy profile $\tau \in M^n$, where a job $j$ has chosen machine $\tau_j$, is determined by the completion times of the respective jobs $C_j(\tau)$. The completion time of job $j$ on machine $i$ is calculated as the sum of its processing time, $p_i$, and the processing times of all the jobs that select machine $i$ and that have a higher priority than job $j$. The social cost function based on strategy profile $\tau$ is defined as the aggregate sum of the completion times for all participants choosing their machines based on strategy $\tau$
\begin{equation}
     SC(\tau) = \sum_{j \in J}^{}C_j(\tau) = \sum_{j \in J}^{}\sum_{\substack{k\leq j \\ \tau_k = \tau_j}}\frac{p_k}{s_{\tau_j}}.
\end{equation}
Let $\tau$ represent a schedule, and consider a machine $i$ and a job $j$. Denote 
\begin{equation}\label{eq:zval}
    \psi_i(\tau,j) = |\{k>j | \tau_k = i\}|,
\end{equation}
as the symbol for the number of jobs on machine $i$ in schedule $\tau$ that have strictly lower priority than job $j$. It is worth noting that the sum of completion times can be reformulated as
\begin{equation}
     SC(\tau) = \sum_{j \in J}^{}p_j\cdot\frac{\psi_{\tau_j}(\tau,j) + 1}{s_{\tau_j}}.
\end{equation}
\subsection{Optimal Schedule for Jobs}
In this subsection, we provide a brief overview of the optimal algorithm for $Q||\sum_{}^{}C_{j}$.

Imagine the scenario where we have only one machine with speed 1 and we aim to minimize the sum of the completion times. From another perspective, the social cost based on a strategy profile $\tau$ can be measured by the cumulative contribution of each job to the total sum
\begin{equation}
      SC(\tau) = \sum_{k=1}^{n}(n-k+1)p_k.
\end{equation}
Therefore, the optimal schedule is a schedule that matches bigger jobs (higher processing times) to smaller values of $(n-k+1)$  in order to minimize the overall impact of the jobs on the social cost. The MFT algorithm is a generalization of this idea to the case of $m$ machines. The complexity of this algorithm is $O(n\log{nm})$ \cite{horowitz1976exact}.

\begin{algorithm}
\caption{MFT Algorithm for $Q||\sum_{}^{}C_{j}$}\label{alg:MFTQ}
\begin{algorithmic}
\State For each machine $i$ set $\psi_i = 0$.
\While{not all jobs are scheduled}
\State Let $j$ be the unscheduled job with largest index
\State Assign job $j$ to the machine with the smallest value of $(\psi_i + 1)/s_i$.
\State For that machine update $\psi_i = \psi_i + 1\cdot$
\EndWhile
\State Sort the jobs on each machine in SPT order.
\end{algorithmic}
\end{algorithm}
We assume that when encountering ties in choosing the smallest value of $(\psi_i + 1)/s_i$ during the execution of the MFT algorithm, the established tie-breaking criterion dictates selecting the machine with the highest index (and thus greatest speed).

\begin{theorem}[Horowitz and Sahni \cite{horowitz1976exact}]
The MFT (Mean Flow Time) algorithm produces optimal schedules for the related machine scheduling problem $Q||\sum_{}^{}C_{j}$ and any optimal schedule can be computed by the MFT algorithm with a proper tie-breaking rule.
\end{theorem}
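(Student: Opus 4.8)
The plan is to recast the problem as a weighted assignment of ``position coefficients'' and then invoke a rearrangement argument. Using the reformulation $SC(\tau)=\sum_{j\in J}p_j\cdot(\psi_{\tau_j}(\tau,j)+1)/s_{\tau_j}$, observe that each job $j$ effectively receives a coefficient $c_j=(\psi_{\tau_j}(\tau,j)+1)/s_{\tau_j}$, which equals $q/s_i$ exactly when $j$ occupies the $q$-th position from the end on machine $i$ (since $\psi_{\tau_j}(\tau,j)$ counts the lower-priority jobs after it). Thus a schedule is equivalent to choosing, for each machine $i$ that receives $k_i$ jobs, the coefficients $1/s_i,2/s_i,\ldots,k_i/s_i$, and then matching the selected coefficients to the jobs, with social cost $\sum_j p_j c_j$. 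Let $G=\{q/s_i: i\in M,\ q\in\mathbb{Z}_{\ge 1}\}$ be the grid of all available coefficients; a selection is feasible precisely when it is downward closed in each machine's column, and any optimal schedule corresponds to a size-$n$ feasible selection together with a bijection to the jobs.

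For optimality of MFT I would argue in two steps. First, for a fixed selection of $n$ coefficients, the rearrangement inequality shows that $\sum_j p_j c_j$ is minimized by pairing the largest processing times with the smallest coefficients, i.e.\ matching the $t$-th largest job to the $t$-th smallest selected coefficient. Second, I claim the $n$ smallest elements of $G$ form the cost-minimizing selection: this set is feasible (smaller entries in any column are never skipped), and for every $n$-element $S\subseteq G$ the $t$-th smallest element of $S$ is at least the $t$-th smallest element of $G$, so after sorting both and pairing with the fixed decreasing weights $p_{(t)}\ge 0$ the globally-smallest selection dominates termwise. Combining the two steps, the optimum equals $\sum_t p_{(t)}\,\gamma_{(t)}$, where $\gamma_{(1)}\le\cdots\le\gamma_{(n)}$ are the $n$ smallest entries of $G$. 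Finally I would verify that MFT realizes this value: processing jobs from largest to smallest index and assigning each to the machine minimizing $(\psi_i+1)/s_i$ is precisely a greedy extraction of the $n$ smallest entries of $G$ in non-decreasing order, so the $t$-th largest job is assigned coefficient $\gamma_{(t)}$. Here one checks that every job later assigned to the same machine is smaller, hence precedes $j$ in SPT order, so $j$'s coefficient is already fixed at assignment and the closing SPT sort only reorders within machines without altering the coefficient multiset.

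For the converse I would start from an arbitrary optimal schedule $\tau^*$ and reconstruct it by MFT with a suitable tie-breaking rule. Optimality forces $\tau^*$ to use $n$ coefficients whose sorted values coincide with $\gamma_{(1)}\le\cdots\le\gamma_{(n)}$ (otherwise the termwise domination above would be strict somewhere) and to match them in rearrangement order up to ties. I would then run MFT and, at the step assigning the $t$-th largest job, break the tie among the machines attaining $\min_i(\psi_i+1)/s_i$ by selecting the machine that this job occupies in $\tau^*$; an inductive check shows this machine always attains the minimum, so the rule is admissible and MFT reproduces $\tau^*$.

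The main obstacle is this converse direction, and specifically the bookkeeping around ties: when several grid coefficients share the boundary value $\gamma_{(n)}$, or when several jobs share a processing time, there are genuinely distinct optimal schedules, and one must show that each is attainable by a single consistent MFT tie-breaking rule rather than merely arguing that some optimum is produced. Establishing that the proposed tie-break is always feasible -- that the chosen machine indeed minimizes $(\psi_i+1)/s_i$ at every step -- is the crux; by contrast the optimality direction reduces cleanly to the rearrangement inequality together with the order-statistic domination.
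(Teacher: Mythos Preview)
The paper does not prove this theorem; it is stated as a classical result of Horowitz and Sahni and only cited. There is therefore no proof in the paper to compare against, so let me simply assess your sketch.

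Your forward direction is the standard argument and is correct: rewriting the cost as $\sum_j p_j c_j$ with positional coefficients $c_j=(\psi_{\tau_j}(\tau,j)+1)/s_{\tau_j}$, applying the rearrangement inequality for a fixed coefficient multiset, and then showing via order-statistic domination that the $n$ smallest elements of the grid $G=\{q/s_i\}$ are optimal, is precisely how optimality of MFT is established. The verification that MFT greedily extracts these $n$ smallest values in nondecreasing order, and that the final SPT sort does not alter any job's coefficient, is clean.

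For the converse, however, the specific tie-breaking rule you propose (at each step pick the machine $\tau^*$ assigns to the current job) can fail when processing times coincide. Take two unit-speed machines and three jobs with $p_1=p_2=p_3=1$; the assignment $\tau^*=(1,2,2)$ is optimal. MFT first places job~$3$ (tie; choose machine~$2$), giving $\psi=(0,1)$. At the next step job~$2$ sees coefficients $1$ on machine~$1$ and $2$ on machine~$2$, so MFT is \emph{forced} to machine~$1$, contradicting $\tau^*_2=2$. Thus no machine-only tie-breaking reproduces this $\tau^*$. What is actually needed is to let the tie-breaking also govern \emph{which job} is selected next among those of equal processing time (equivalently, permute equal-$p$ jobs before running MFT); with that additional freedom the inductive check you describe does go through. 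You correctly identified the tie bookkeeping as the crux, but the rule you wrote down is not sufficient as stated.
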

Moreover, we introduce a lemma from Hoeksma and Uetz \cite{hoeksma2019price} that describes a useful characteristic of an optimal schedule and will be instrumental for our proofs in the later stages. In simple words, this lemma asserts that a job assigned to a machine cannot possess a better positional value than the jobs already allocated to any other machine.
\begin{lemma}[Hoeksma and Uetz \cite{hoeksma2019price}]\label{optINEQ}
A schedule $\tau^*$ is optimal for $Q||\sum_{}^{}C_{j}$ if and only if 
\begin{equation}\label{eq:IEQopt}
 \frac{\psi_i(\tau^*,j) + 1}{s_i} \geq \frac{\psi_l(\tau^*,j)}{s_l},
\end{equation}
for all machines $i$ and $l$, and for all jobs $j$.
\end{lemma}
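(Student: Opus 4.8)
The plan is to prove the two implications separately, in both cases leaning on the positional form $SC(\tau)=\sum_{j} p_j\,(\psi_{\tau_j}(\tau,j)+1)/s_{\tau_j}$ and on the Horowitz--Sahni characterization of optimal schedules as exactly the outputs of the MFT algorithm (Algorithm~\ref{alg:MFTQ}). I read $(\psi_i(\tau^*,j)+1)/s_i$ as the positional weight job $j$ would pay if it were inserted in SPT order on machine $i$, and $\psi_l(\tau^*,j)/s_l$ as the weight currently carried by the job of least index among those exceeding $j$ on machine $l$. With this dictionary \eqref{eq:IEQopt} says that, for every threshold $j$, the jobs of index larger than $j$ sit in the globally cheapest positions: no still-free slot on any machine $i$ is cheaper than an occupied slot on any machine $l$. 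This ``every suffix is greedily packed'' reading is exactly the invariant that MFT maintains as it processes jobs from the largest index downward, and it is the bridge I would use.

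For necessity (optimal $\Rightarrow$ \eqref{eq:IEQopt}) I would first use Horowitz--Sahni to realize an optimal $\tau^*$ as an MFT run, so that each scheduled job landed on a machine minimizing $(\psi_i+1)/s_i$ with the counts $\psi_i$ equal to $\psi_i(\tau^*,\cdot)$ at that moment. Fix $i,l,j$; the case $\psi_l(\tau^*,j)=0$ is immediate, so let $k>j$ be the job of least index among those exceeding $j$ on $l$, which occupies slot $\psi_l(\tau^*,j)$ and hence carries weight $\psi_l(\tau^*,j)/s_l=(\psi_l(\tau^*,k)+1)/s_l$. When MFT placed $k$ it chose $l$ as a minimizer, giving $(\psi_l(\tau^*,k)+1)/s_l\le(\psi_i(\tau^*,k)+1)/s_i$, and the monotonicity $\psi_i(\tau^*,k)\le\psi_i(\tau^*,j)$ (because $k>j$ forces $\{m>k\}\subseteq\{m>j\}$) upgrades the right-hand side to $(\psi_i(\tau^*,j)+1)/s_i$. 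Chaining the two inequalities gives \eqref{eq:IEQopt}.

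For sufficiency (\eqref{eq:IEQopt} $\Rightarrow$ optimal) I would pass to a lower bound obtained by rearrangement: over the grid of all positional weights $\{t/s_i\}$, any feasible schedule uses $n$ of them forming a prefix on each machine, and since larger jobs are forced to cheaper weights, $SC(\tau)\ge\sum_{r} p_{(n-r+1)}\,w^{(r)}$, where $w^{(1)}\le w^{(2)}\le\cdots$ are the sorted grid weights and $p_{(1)}\le\cdots\le p_{(n)}$ the sorted processing times. The target is to show that \eqref{eq:IEQopt} forces the occupied-weight multiset of $\tau^*$ to be exactly the $n$ cheapest grid weights, matched in opposite order, so that $\tau^*$ attains this bound and is optimal. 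Alternatively, I would feed \eqref{eq:IEQopt} into the assignment LP and its dual from Section~2 and read the inequalities off as a dual-feasible certificate of value $SC(\tau^*)$, exploiting the interval (transportation) structure of the constraint matrix to guarantee integrality.

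The step I expect to be the main obstacle is precisely this sufficiency direction, and the delicate point is the gap between a job's own positional weight $(\psi_{\tau_j}(\tau,j)+1)/s_{\tau_j}$ and the weight $\psi_l(\tau^*,j)/s_l$ of the last occupied slot appearing on the right of \eqref{eq:IEQopt}: relocating a single job shifts the positions, and hence the completion times, of all lower-priority jobs on both machines at once, so the purely local, ``$+1$-shifted'' inequality cannot be promoted to a statement about $SC$ one deviation at a time. I would therefore route the whole argument through the global rearrangement or dual certificate above rather than through isolated exchanges, and treat the careful bookkeeping that reconciles the two ``$+1$'' conventions as the technical heart of the proof.
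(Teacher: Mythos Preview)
The paper does not contain a proof of this lemma: it is quoted from Hoeksma and Uetz~\cite{hoeksma2019price} and stated without argument, and only the forward direction (optimal $\Rightarrow$ \eqref{eq:IEQopt}) is ever invoked downstream (in the proofs of Theorems~\ref{thm3}, \ref{thm4}, and \ref{thm:div}). There is therefore no in-paper proof to compare your proposal against.

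Your necessity argument is correct and is essentially the standard one: realize $\tau^*$ as an MFT trace via Theorem~1, take the least-index $k>j$ on machine $l$ so that $\psi_l(\tau^*,j)=(\psi_l(\tau^*,k)+1)$, use the MFT greedy choice at step $k$ to get $(\psi_l(\tau^*,k)+1)/s_l\le(\psi_i(\tau^*,k)+1)/s_i$, and finish with $\psi_i(\tau^*,k)\le\psi_i(\tau^*,j)$.

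For sufficiency, however, the statement \emph{as written here} is false, and your rearrangement plan cannot succeed. Take two machines with speeds $1$ and $2$ and a single job placed on the slow machine. Then $\psi_i(\tau^*,1)=0$ for both $i$, so \eqref{eq:IEQopt} reads $1/s_i\ge 0$ and holds trivially, yet the schedule is not optimal. The failure is exactly at the point you yourself flagged as ``the main obstacle'': because the right-hand side $\psi_l(\tau^*,j)/s_l$ only sees jobs of index strictly larger than $j$, the condition for $j=n$ is vacuous and cannot constrain where the largest job sits. Consequently \eqref{eq:IEQopt} does \emph{not} force the occupied-weight multiset to equal the $n$ cheapest grid weights, which is what your rearrangement/dual certificate would need. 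Either the ``if'' in the lemma is a slight overstatement relative to the source, or the original carries an additional hypothesis not reproduced here; in any case only the forward implication is needed for this paper.
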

\subsection{Pure Nash Equilibria and Price of Anarchy}
Define $\tau_{-j}$ as a vector obtained by removing $\tau_{j}$ from $\tau$, where $|\tau_{-j}| = n-1$. In other words, $\tau_{-j}$ represents the strategy vector of all agents except agent $j$. Therefore, we can express $\tau$ as the concatenation of $\tau_{-j}$ and $\tau_{j}$, denoted as $\tau = (\tau_{-j}, \tau_{j})$. For the problem $Q||\sum_{}^{}C_{j}$ a strategy profile $\tau = (\tau_{-j}, \tau_{j}) \in M^n$ is a \textit{(pure) Nash equilibrium} if 
\begin{equation}
C_j(\tau) = C_j(\tau_{-j}, \tau_{j}) =\sum_{\substack{k\leq j \\ \tau_k = \tau_j}}\frac{p_k}{s_{\tau_j}} \leq \frac{p_j}{s_{i}}+\sum_{\substack{k< j \\ \tau_k = i}}\frac{p_k}{s_{i}}= C_j(\tau_{-j}, i)  \tag{$\forall i \in M$, $\forall j \in J$}.
\end{equation}
It means that a Nash equilibrium is reached in a pure strategy profile when no job has the motivation to shift to a different machine because such a move would be futile in terms of reducing its completion time within the outcome schedule. 

\begin{algorithm}\label{alg:ibr-k}
\caption{Ibarra-Kim Algorithm for problem $Q||\sum_{}^{}C_{j}$}\label{alg:Ibarra-Kim}
Sort the jobs based on the non-decreasing order of their processing times.
\begin{algorithmic}
\While{not all jobs are scheduled}
\State Take job $k$ from the beginning of the sorted list and delete it from the list.
\State Let machine $i$ be the machine where job $k$ has minimal completion time.
\State Schedule job $k$ directly after the jobs already scheduled on machine $i$.
\EndWhile
\end{algorithmic}
\end{algorithm}

As we mentioned earlier, the global order for each machine is based on the SPT rule and the completion time of a job in a machine depends only on the jobs with higher priority scheduled on that machine. The well-known \textit{Ibarra-Kim} algorithm was first introduced by Ibarra and Kim \cite{ibarra1977heuristic} as an approximation algorithm for the optimization variant of the problem. This algorithm first sorts the jobs based on the non-decreasing order of their processing times (SPT rule) and then greedily assigns the jobs from the beginning of the sorted list to a machine that minimizes their completion time (see Algorithm \ref{alg:ibr-k}). The following theorem shows that the set of schedules produced by the Ibarra-Kim algorithm is exactly equal to the set of the pure strategy Nash equilibria, based on the manner in which ties are resolved.
\begin{theorem}[Heydenreich et al. \cite{heydenreich2007games}, Immorlica et al \cite{immorlica}]
Depending on how ties are resolved, the collection of pure Nash equilibria for unrelated (related, parallel) machines coincides precisely with the solutions derived from the Ibarra-Kim algorithm.
\end{theorem}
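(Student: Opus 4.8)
The plan is to prove the two inclusions separately, both resting on a single structural consequence of the SPT (Smith's rule) global order: the completion time of a job $j$ on any machine $i$ is determined solely by the jobs of higher priority (smaller index under the consistent total order) assigned to $i$, and is entirely unaffected by lower-priority jobs. Concretely, from $C_j(\tau)=\sum_{k\le j,\,\tau_k=\tau_j} p_k/s_{\tau_j}$, only indices $k\le j$ enter the sum, so appending to machine $\tau_j$ a job whose index exceeds $j$ leaves $C_j$ unchanged. First I would record this ``frozen completion time'' observation as the working lemma for both directions.

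For the inclusion ``Ibarra--Kim output $\Rightarrow$ Nash equilibrium'', I would process the jobs in the SPT order used by the algorithm and argue that each job's completion time is fixed at the moment it is placed. When job $k$ is scheduled, the algorithm picks a machine $i$ minimizing its current completion time $p_k/s_i+\sum_{l<k,\,\tau_l=i}p_l/s_i$; every job placed afterwards has a larger index and so, by the structural observation, does not alter $C_k$. A hypothetical deviation of $k$ to any machine $i'$ would yield exactly $C_k(\tau_{-k},i')=p_k/s_{i'}+\sum_{l<k,\,\tau_l=i'}p_l/s_{i'}$, the very quantity the greedy step minimized. Hence no deviation is profitable and the Nash inequalities hold for every job.

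For the converse, ``Nash equilibrium $\Rightarrow$ Ibarra--Kim output'', I would run the algorithm alongside a given equilibrium $\tau$ and show by induction on the SPT order that $\tau$ is reproducible with a suitable tie-breaking rule. Assuming jobs $1,\dots,k-1$ have been placed as in $\tau$, the jobs of higher priority than $k$ on a machine $i$ coincide with the jobs of index $<k$ already assigned to $i$, so the completion time the algorithm computes for $k$ on $i$ equals $C_k(\tau_{-k},i)$. The equilibrium condition $C_k(\tau)\le C_k(\tau_{-k},i)$ for all $i$ states precisely that $\tau_k$ attains the minimum of these quantities, so placing $k$ on $\tau_k$ is a legal greedy choice; breaking ties in favor of $\tau_k$ advances the induction and ultimately outputs $\tau$.

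The main obstacle I expect is the careful bookkeeping of ties, which is exactly where the qualifier ``depending on how ties are resolved'' enters. Two kinds of ties must be handled: equal processing times, resolved by the assumed universally consistent total order, which I would invoke to identify ``higher priority than $k$'' with ``index smaller than $k$'' throughout both directions; and equal minimal completion times across machines in the greedy step, which account for the multiplicity of outputs. The delicate point is verifying that in the reverse direction the induction never forces a placement inconsistent with $\tau$ — that at each step $\tau_k$ really is among the greedy minimizers — so that a tie-breaking rule realizing $\tau$ genuinely exists. This again reduces to the frozen-completion-time property, since it guarantees that the quantities compared by the algorithm at step $k$ match the deviation costs in the equilibrium inequalities exactly.
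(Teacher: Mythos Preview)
The paper does not supply its own proof of this theorem; it is stated with attribution to Heydenreich et al.\ and Immorlica et al.\ and used as a black box. Your proposal is correct and is essentially the standard argument one finds in those references: the ``frozen completion time'' observation (that under a global SPT priority, $C_j$ depends only on jobs of strictly higher priority on the same machine) is exactly the structural fact that makes both inclusions work, and your inductive reconstruction of a Nash profile by the algorithm with tailored tie-breaking is the usual proof of the converse. Your discussion of the two kinds of ties---equal processing times handled by the fixed consistent total order, and equal greedy minima handled by the per-step tie-breaking rule---is accurate and matches how the qualifier ``depending on how ties are resolved'' is meant.
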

Hence, examining the price of anarchy in the related machine scheduling game is tantamount to analyzing the approximation factor of the Ibarra-Kim algorithm for related machine scheduling. 

Let $\chi$ be the set containing all possible pure Nash equilibria. Consider $\tau^* \in M^n$ as the strategy profile that minimizes the social cost. Then the \textit{(pure) price of anarchy} is defined as 
\begin{equation}
PoA = \sup_{\tau \in \chi}\frac{SC(\tau)}{SC(\tau^*)}\cdot
\end{equation}
The primal-dual method, introduced by Bilò \cite{bilo2018unifying}, is a strong approach for analyzing the price of anarchy and is able to effectively capture the characteristics of pure Nash equilibria. Consider the problem $Q||\sum_{}^{}C_{j}$ and assume a Nash equilibrium $\tau$ and an optimal strategy profile $\tau^*$ is given. Based on the primal-dual method, the aim is to construct a linear program $LP(\tau, \tau^*)$ that maximizes $SC(\tau)$ with respect to the processing times and subject to the constraints that $SC(\tau^*)$ is normalized to one and $\tau$ must be a Nash equilibrium. 
It is noteworthy that the processing times can be scaled down uniformly in a way that the social cost of the optimal schedule becomes equal to one. Based on weak duality, any feasible solution for the dual problem with an objective value provides an upper bound on the optimal primal solution and thus an upper bound on the price of anarchy. As mentioned in the related literature, Brokkelkamp \cite{closePHD} employed the dual-primal method to achieve the same bounds as Hoeksma and Uetz \cite{hoeksma2019price}. For the sake of completeness, we briefly introduce their weak linear programming formulation for the problem $Q||\sum_{}^{}C_{j}$. This version of LP is called ``weak" since it only allows the deviation of a job to the optimal machine for the Nash equilibrium constraints (and not to all other machines as required in a Nash equilibrium). This idea follows from the smoothness concept \cite{hoeksma2019price}. This idea follows from the smoothness concept \cite{hoeksma2019price}. The inputs of the LP are $s = (s_1,...,s_m)$, $\tau^*$, and $\tau$ where $s$ is the speed vector of machines and $\tau^*$, and $\tau$ are the optimal schedule and Nash equilibrium profile, respectively. Let $p = (p_1,...,p_n)$ denote the vector of processing times. The LP is obtained as follows

\begin{align*}
   &\phantom{} LP(\tau^*,\tau,s) = \max_{p } \sum_{j \in J}^{} \frac{1}{s_{\tau_j}}\sum_{\substack{k\leq j \\ \tau_k = \tau_j}}p_k \qquad s.t.\\
   &  \sum_{j \in J}^{} \frac{1}{s_{\tau^*_j}}\sum_{\substack{k\leq j \\ \tau^*_k = \tau^*_j}}p_k = 1, \tag{normalization constraint} \\
   & \frac{1}{s_{\tau_j}} \left[  p_j + \sum_{\substack{k < j \\ \tau_k = \tau_j}}p_k \right] -  \frac{1}{s_{\tau^*_j}} \left[  p_j + \sum_{\substack{k < j \\ \tau_k = \tau^*_j}}p_k \right] \leq 0, \tag{deviation constraints ; $\forall j \in J$}\\
   & p_j - p_{j+1} \leq 0, \tag{SPT constraints ;  $\forall j \in \{1,..,n-1\}$}\\
   & -p_1 \leq 0.\\
\end{align*}

In order to obtain the dual of this linear program, the dual variables are defined as $\beta$ for the normalization constraint, $y_j$ variables for each of the deviation constraints, and $z_1$ to $z_{n-1}$ for the SPT constraints. Note that $z_n = z_0 = 0$. The dual is obtained as follows

\begin{align*}
&\hspace*{0pt} \text{Dual}(\tau^*,\tau,s) = \min \; \beta \quad \text{s.t.}\\
&  \beta \cdot \frac{1}{s_{\tau^*_k}} \sum_{\substack{j \geq k \\ \tau^*_k = \tau^*_j}}1 +
   \frac{1}{s_{\tau_k}} \sum_{\substack{j \geq k \\ \tau_j = \tau_k}}y_j -
   \frac{y_k}{s_{\tau^*_k}} -
   \frac{1}{s_{\tau_k}} \sum_{\substack{j > k \\ \tau^*_j = \tau_k}}y_j +
   z_k - z_{k-1} \geq \frac{1}{s_{\tau_k}} \sum_{\substack{j \geq k \\ \tau_j = \tau_k}}1, && \tag{$\forall k \in J$}\\
& y_j \geq 0, && \tag{$\forall k \in J$}\\
& z_k \geq 0. && \tag{$\forall k \in \{1,..,n-1\}$}
\end{align*}

Henceforth in this study, we will refer to $LP(\tau, \tau^*)$ and $Dual(\tau, \tau^*)$ as $LP_{weak}$ and $Dual_{weak}$, respectively. An upper bound of 2 can be easily obtained by a feasible solution of $Dual_{weak}$ given by $\beta = 2$, $y_j = 1$, and $z_j = 0$ for all $j \in J$ \cite{closePHD}. However, for the worst-case instance described in \cite{hoeksma2019price}, the optimal dual fitting for the weak LP yields a value of 5/3. It is known that this value can be as low as e/(e - 1) \cite{hoeksma2019price}. In this study, we use a more relaxed version of the $LP_{weak}$ to improve the existing upper bound of 2.

\section{An upper bound for the price of anarchy of $Q||\sum_{}^{}C_{j}$}

In this section, we improve the existing upper bound for the problem $Q||\sum_{}^{}C_{j}$. First, we analyze the case where the number of machines is equal to 2 and prove an upper bound of $3/2$ for the pure price of anarchy. Then, we provide an example that shows that our analysis is tight for this case. Next, we generalize the upper bound to $2-\frac{1}{2\cdot(2m-1)}$ for the case of $m$ machines and we also show that when the machines have divisible speeds this bound can be further improved to $2-\frac{1}{2\cdot m}$. 

Given $\tau^*$ and $\tau$, an optimal schedule and a Nash equilibrium schedule, respectively, the optimal objective value of the following linear program is lower bounded by the sum of completion times of Nash equilibrium $\tau$. Therefore, we can use its dual to obtain an upper bound on the price of anarchy. Define $LP_{sum}$ as follows
\begin{align*}
   &\phantom{} LP(\tau^*,\tau,s) = \max_{p} \sum_{j \in J}^{} \frac{1}{s_{\tau_j}}\sum_{\substack{k\leq j \\ \tau_k = \tau_j}}p_k \qquad s.t.\\
   &  \sum_{j \in J}^{} \frac{1}{s_{\tau^*_j}}\sum_{\substack{k\leq j \\ \tau^*_k = \tau^*_j}}p_k = 1, \tag{normalization constraint} \\
   & \sum_{j \in J}^{} \left[ \frac{1}{s_{\tau_j}} \left( p_j + \sum_{\substack{k < j \\ \tau_k = \tau_j}}p_k \right) \right] -  \sum_{j \in J}^{} \left[ \frac{1}{s_{\tau^*_j}} \left(  p_j + \sum_{\substack{k < j \\ \tau_k = \tau^*_j}}p_k  \right) \right] \leq 0, \tag{deviation constraint}\\
   & p_j - p_{j+1} \leq 0, \tag{ SPT constraints ; $\forall j \in \{1,..,n-1\}$}\\
   & -p_1 \leq 0.
\end{align*}

To obtain the dual linear program $Dual_{sum}$ of $LP_{sum}$, the dual variables are defined as $\beta$ for the normalization constraint, $y$ for the deviation constraint, and $z_1$ to $z_{n-1}$ for the SPT constraints. Note that $z_n = 0$. $Dual_{sum}$ is defined as follows

\begin{align*}
&\hspace*{0pt} \text{Dual}(\tau^*,\tau,s) = \min \; \beta \quad \text{s.t.}\\
&  \beta \cdot \frac{1}{s_{\tau^*_k}} \sum_{\substack{j \geq k \\ \tau^*_k = \tau^*_j}}1 +
   \frac{y}{s_{\tau_k}} \sum_{\substack{j \geq k \\ \tau_j = \tau_k}}1 -
   \frac{y}{s_{\tau^*_k}} -
   \frac{y}{s_{\tau_k}} \sum_{\substack{j > k \\ \tau^*_j = \tau_k}}1 +
   z_k - z_{k-1} \geq \frac{1}{s_{\tau_k}} \sum_{\substack{j \geq k \\ \tau_j = \tau_k}}1, && \tag{$\forall k \in J$}\\
& z_k \geq 0. && \tag{$\forall k \in \{1,..,n-1\}$}\\
& y \geq 0.
\end{align*}

Upon careful examination of $Dual_{sum}$, it becomes evident that $Dual_{sum}$ can be derived from $Dual_{weak}$ by uniformly setting all $y_j$ values to $y$. This is in line with the dual solution proposed by Brokkelkamp \cite{closePHD}: $\beta = 2$, $y_j = 1$, and $z_j = 0$ for all $j \in J$. In order to improve this dual solution, we maintain the assumption that $y=1$, so that the second term on the left-hand side of the deviation constraint of $Dual_{sum}$ cancels with the term on the right-hand side of the inequality. Additionally, we will define some $z_j$ variables to take non-zero values in order to attain a bound that is strictly below 2 for a fixed number of machines. 

We first assume the number of machines is equal to 2 and provide a bound of 3/2. 

\begin{theorem}\label{thm3}
The pure price of anarchy for $Q2||\sum_{}^{}C_{j}$ is bounded above by 3/2.
\end{theorem}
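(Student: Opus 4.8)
The plan is to exhibit an explicit feasible solution of $Dual_{sum}$ with objective value $\beta = 3/2$ for the case $m=2$; by weak duality this dominates the maximum of $LP_{sum}$, which in turn upper bounds $SC(\tau)$, and since $SC(\tau^*)$ is normalized to one, this yields $PoA \le 3/2$. Following the recipe laid out just before the theorem, I fix $y = 1$ so that the $y$-term on the left and the right-hand side of each dual constraint (both equal to $\tfrac{1}{s_{\tau_k}}\sum_{j\ge k,\,\tau_j=\tau_k}1$) cancel, collapsing the constraint indexed by $k$ to
\begin{equation*}
\beta\cdot\frac{1}{s_{\tau^*_k}}\,|\{j\ge k:\tau^*_j=\tau^*_k\}|
-\frac{1}{s_{\tau^*_k}}
-\frac{1}{s_{\tau_k}}\,|\{j> k:\tau^*_j=\tau_k\}|
+z_k-z_{k-1}\ \ge\ 0.
\end{equation*}
The task then reduces to choosing nonnegative telescoping variables $z_1,\dots,z_{n-1}$ (with $z_0=z_n=0$) so that every such inequality holds with $\beta=3/2$.

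First I would introduce convenient bookkeeping for the two machines: write $s_1=1$ and $s_2=s\ge 1$, and for each job $k$ record its optimal-machine rank $r^*_k=|\{j\ge k:\tau^*_j=\tau^*_k\}|$ (its positional value in $\tau^*$) and the quantity $q_k=|\{j>k:\tau^*_j=\tau_k\}|$, the number of jobs that outrank $k$ and sit on $k$'s Nash machine in the optimum. The key structural input is Lemma~\ref{optINEQ}: optimality of $\tau^*$ forces the positional values on the two machines to interleave tightly, namely $\tfrac{\psi_i(\tau^*,j)+1}{s_i}\ge\tfrac{\psi_l(\tau^*,j)}{s_l}$ for all $i,l,j$. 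I would use this to control the two ``bad'' negative terms $-\tfrac{1}{s_{\tau^*_k}}$ and $-\tfrac{q_k}{s_{\tau_k}}$ relative to the ``good'' term $\tfrac{3}{2}\cdot\tfrac{r^*_k}{s_{\tau^*_k}}$, case-splitting on the four combinations of whether $\tau_k$ and $\tau^*_k$ are the fast or slow machine.

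Next I would define the $z_k$ explicitly. The natural choice is a telescoping sum that accumulates the surplus from constraints where the good term comfortably beats the bad terms and releases it where it is needed; concretely I would set $z_k$ as a partial sum of per-job slacks $\delta_k$ chosen so that $z_k-z_{k-1}$ exactly absorbs the deficit in constraint $k$, then verify $z_k\ge 0$ for all $k$ and $z_n=0$. Because there are only two machines, the sequence of Nash and optimal assignments can be summarized by a small number of patterns, and I expect the deficits to be at most $\tfrac{1}{2s}\le\tfrac12$ per job, which the telescoping reserve can cover. I would check the boundary constraints ($k=n$, and any job that is alone on its Nash machine so $q_k=0$) separately, since those are where $z_n=0$ and $z_0=0$ bite.

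The main obstacle I anticipate is the constraint for a job whose Nash machine is the fast one while its optimal machine is the slow one: here $\tfrac{1}{s_{\tau^*_k}}=1$ is large and $\tfrac{q_k}{s_{\tau_k}}=\tfrac{q_k}{s}$ can be sizable, so the good term $\tfrac{3}{2}r^*_k$ must be shown large enough, which is exactly where Lemma~\ref{optINEQ} must be pushed hardest to bound $q_k$ in terms of $r^*_k$ and $s$. Getting the telescoping variables to remain nonnegative globally while simultaneously meeting $z_n=0$ is the delicate balancing act; I would handle it by first proving a clean per-job inequality relating $q_k$, $r^*_k$ and $s$ from optimality, and only then defining the $z_k$ so that feasibility becomes an immediate consequence. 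After establishing feasibility at $\beta=3/2$, I would conclude the bound and (per the surrounding text) follow up with a matching instance showing tightness.
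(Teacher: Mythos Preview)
Your framework is right and matches the paper's: fix $y=1$ so the dual constraint for job $k$ reduces to inequality~(\ref{eq:GIN}), then find nonnegative $z_k$ making $\beta=3/2$ feasible. But your proposal stops at the plan stage and never produces the $z_k$; the sentence ``set $z_k$ as a partial sum of per-job slacks $\delta_k$ chosen so that $z_k-z_{k-1}$ exactly absorbs the deficit'' is not a construction, and the claim that deficits are ``at most $\tfrac{1}{2s}$ per job'' is neither proved nor, in fact, what happens.

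What you are missing is the structural observation that does all the work in the paper: the MFT output has a very specific block shape on two machines (Figure~\ref{figblock}). The last $\lfloor s\rfloor$ jobs sit on the fast machine (Block~1), the single job $n-\lfloor s\rfloor$ is the \emph{only} job on the slow machine with $\psi_{\tau^*_k}(\tau^*,k)=0$, and every other job $k$ has $\psi_{\tau^*_k}(\tau^*,k)\ge 1$, whence Lemma~\ref{optINEQ} alone gives $\beta\ge 3/2$ with $z_{k-1}=z_k=0$. So there is exactly one ``deficit'' constraint, at the critical job $k=n-\lfloor s\rfloor$, and the paper handles it with a \emph{single} nonzero variable
\[
z_{n-\lfloor s\rfloor}=\frac{\lfloor s\rfloor}{s}\cdot\frac{\lfloor s\rfloor}{s+\lfloor s\rfloor},
\]
chosen precisely so that the deficit at $k=n-\lfloor s\rfloor$ and the induced surplus-consumption at $k=n-\lfloor s\rfloor+1$ both come out to $1+\tfrac{\lfloor s\rfloor}{s+\lfloor s\rfloor}\le 3/2$. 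Your four-way case split on $(\tau_k,\tau^*_k)$ does not expose this localization; the hard case ``slow in $\tau^*$, fast in $\tau$'' you correctly flag is only truly binding for the one critical job, and without the MFT block structure you have no handle on \emph{which} jobs have $r^*_k=1$ versus $r^*_k\ge 2$.

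In short: the approach is the paper's, but the proof content---identifying the critical job via the MFT block structure and writing down the single nonzero $z$---is absent. A general telescoping scheme is overkill and, as described, not shown to stay nonnegative or to close at $z_n=0$.
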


\begin{proof}
We will show that $\beta = 3/2$, $y = 1$, $z_{n-\lfloor s \rfloor} =\frac{\lfloor s \rfloor}{s}\cdot\frac{\lfloor s \rfloor}{s+\lfloor s \rfloor}$, and $z_j = 0$ for all $j\neq n-\lfloor s \rfloor$,  constitutes a feasible solution for $Dual_{sum}$, valid for all possible values of $\tau^*$, $\tau$, and $s$ and thus yields an upper bound for the price of anarchy. By utilizing the definition introduced by equation (\ref{eq:zval}), we rewrite the dual constraint in the following manner
\begin{alignat*}{2}
\beta \cdot \frac{\psi_{\tau^*_k}(\tau^*,k)+1}{s_{\tau^*_k}} &+ y \cdot \frac{\psi_{\tau_k}(\tau,k)+1}{s_{\tau_k}} - \frac{y}{s_{\tau^*_k}} \\
&- y \cdot \frac{\psi_{\tau_k}(\tau^*,k)}{s_{\tau_k}} +z_k - z_{k-1}&\geq \frac{\psi_{\tau_k}(\tau,k)+1}{s_{\tau_k}}.\tag{$\forall k \in J$} 
\end{alignat*}
By setting $y=1$ and streamlining the dual constraint, we obtain a lower bound for $\beta$ as follows
\begin{equation}\label{eq:GIN}
\beta \geq \frac{1}{\psi_{\tau^*_k}(\tau^*,k)+1} + \frac{\psi_{\tau_k}(\tau^*,k)}{s_{\tau_k}} \cdot \frac{s_{\tau^*_k}}{\psi_{\tau^*_k}(\tau^*,k)+1} +(z_{k-1}- z_k)\cdot \frac{s_{\tau^*_k}}{\psi_{\tau^*_k}(\tau^*,k)+1}.
\end{equation}
Moving forward, we direct our attention towards demonstrating that inequality (\ref{eq:GIN}) is satisfied for all jobs.

The optimal schedule, determined by the MFT algorithm, can be visualized as a block structure shown in Figure \ref{figblock}. Each block in the fast machine consists of a set of jobs that are consecutively assigned to the same machine. Based on the  MFT algorithm, we can observe that Block 1 contains $\lfloor s \rfloor$ longest jobs. The remaining blocks on the fast machine contain either $\lfloor s \rfloor$ or $\lceil s \rceil$ jobs. Furthermore, each block on the slow machine contains precisely one job. Initially, we examine the feasibility of our dual solution for jobs $k \in J$, where $k < n-\lfloor s \rfloor$ (refer to the shaded area of Figure \ref{figblock} for an illustration). In accordance with our dual solution, $z_k=0$ for all $k < n-\lfloor s \rfloor$. Thus, we have
\begin{figure}[t!]
\centering
\includegraphics[scale=0.7]{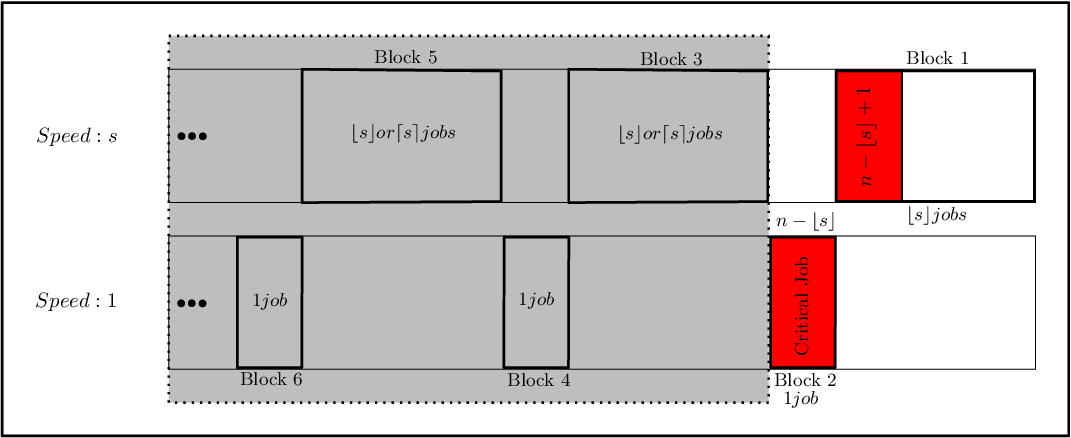}
\caption{The block structure of any optimal schedule when $m=2$.} \label{figblock}
\end{figure}

\begin{alignat*}{2}
\beta= \frac{3}{2}\geq \frac{1}{\psi_{\tau^*_k}(\tau^*,k) + 1} + 1\geq \frac{1}{\psi_{\tau^*_k}(\tau^*,k) + 1} + \frac{\psi_{\tau_k}(\tau^*,k)}{\psi_{\tau^*_k}(\tau^*,k) + 1}\cdot \frac{s_{\tau^*_k}}{s_{\tau_k}},
\end{alignat*}
where the first inequality follows from the fact that $\psi_{\tau^*_k}(\tau^*,k)\geq 1$ for these jobs (see Figure \ref{figblock}), and the second inequality from inequality (\ref{eq:IEQopt}). Hence, inequality (\ref{eq:GIN}) is satisfied for all $k < n - \lfloor s \rfloor$. Now, it remains to extend our analysis for jobs $k \in J$, where $k \geq n-\lfloor s \rfloor$. Let us refer to the job $n-\lfloor s \rfloor$ as the \textit{\textbf{critical job}} (see Figure \ref{figblock}). For the critical job we know that $\psi_{\tau^*_{n-\lfloor s \rfloor}}(\tau^*,{n-\lfloor s \rfloor}) = 0$, $s_{\tau^*_{n-\lfloor s \rfloor}}= 1$, $z_{{n-\lfloor s \rfloor}-1}=0$, and $z_{n-\lfloor s \rfloor} =\frac{\lfloor s \rfloor}{s}\cdot\frac{\lfloor s \rfloor}{s+\lfloor s \rfloor}$. We know that $1 \leq \frac{s}{\lfloor s \rfloor}$ for $s \geq 1$. Thus
\begin{equation}\label{sUP}
  1  \leq \frac{s}{\lfloor s \rfloor} \Leftrightarrow 1 + 1 \leq \frac{s + \lfloor s \rfloor}{\lfloor s \rfloor} \Leftrightarrow \frac{1}{2} \geq \frac{\lfloor s \rfloor}{s+\lfloor s \rfloor}.
\end{equation} 
Moreover, in the Nash equilibrium, the value of $\psi_{\tau_{n-\lfloor s \rfloor}}(\tau^*,n-\lfloor s \rfloor)$ is zero if the critical job is assigned to the slow machine, while $\psi_{\tau_{n-\lfloor s \rfloor}}(\tau^*,n-\lfloor s \rfloor) = \lfloor s \rfloor$ if the critical job is assigned to the fast machine. By substituting the relevant values into inequality (\ref{eq:GIN}) and leveraging the previous inequality, we derive the following bound for the critical job.
\begin{alignat*}{2}
\quad  \beta = \frac{3}{2} = 1 + \frac{1}{2}\geq 1+\frac{\lfloor s \rfloor}{s+\lfloor s \rfloor} &= 1+\frac{\lfloor s \rfloor}{s}-\frac{\lfloor s \rfloor}{s}\cdot\frac{\lfloor s \rfloor}{s+\lfloor s \rfloor}\\
&\geq 1+\frac{\psi_{\tau_{n-\lfloor s \rfloor}}(\tau^*,n-\lfloor s \rfloor)}{s_{\tau_{n-\lfloor s \rfloor}}}- z_{n-\lfloor s \rfloor}.
\end{alignat*}  
Thus, inequality (\ref{eq:GIN}) is satisfied. Now, let us focus on analyzing the jobs in Block 1 ($k > n-\lfloor s \rfloor$). These jobs are always assigned to the fast machine in the optimal schedule. By considering potential machines that can used in the Nash equilibria, we can simplify inequality (\ref{eq:GIN}) and see which one results in a stronger bound for $\beta$. If $\tau_k$ = $1$, then, $\psi_{\tau_k}(\tau^*,k) = 0$ and inequality (\ref{eq:GIN}) becomes
\begin{alignat*}{2}
\quad & \beta \geq \frac{1}{\psi_{\tau^*_k}(\tau^*,k)+1} +(z_{k-1}- z_k)\cdot \frac{s_{\tau^*_k}}{\psi_{\tau^*_k}(\tau^*,k)+1}.
\end{alignat*}
If $\tau_k$ = $2$, inequality (\ref{eq:GIN}) becomes
\begin{alignat*}{2}
\beta & \geq \frac{1}{\psi_{\tau^*_k}(\tau^*,k)+1} + \frac{\psi_{\tau^*_k}(\tau^*,k)}{\psi_{\tau^*_k}(\tau^*,k)+1} \cdot \frac{s_{\tau^*_k}}{s_{\tau^*_k}} +(z_{k-1}- z_k)\cdot \frac{s_{\tau^*_k}}{\psi_{\tau^*_k}(\tau^*,k)+1} \\
&= 1 +(z_{k-1}- z_k)\cdot \frac{s_{\tau^*_k}}{\psi_{\tau^*_k}(\tau^*,k)+1}.
\end{alignat*}
Hence, it becomes evident that when the machine that is used in the Nash equilibrium coincides with the fast machine, we obtain a stronger bound. Therefore, for the jobs in this block, our focus is on the latter inequality. Based on inequality (\ref{eq:GIN}), $z_{n-\lfloor s \rfloor}$ has an impact on the job $j_{n-\lfloor s \rfloor+1}$ which is in Block 1. Based on what we discussed earlier and according to our dual solution, for this particular job, we have that inequality (\ref{eq:GIN}) is satisfied because 
\begin{alignat*}{2}
\quad  \beta=\frac{3}{2} = 1 + \frac{1}{2} \geq 1+\frac{\lfloor s \rfloor}{s+\lfloor s \rfloor} &= 1 + z_{n-\lfloor s \rfloor}\cdot \frac{s}{\lfloor s \rfloor} \\
&= 1 +z_{n-\lfloor s \rfloor}\cdot \frac{s_{\tau^*_{n-\lfloor s \rfloor+1}}}{\psi_{\tau^*_{n-\lfloor s \rfloor+1}}(\tau^*,{n-\lfloor s \rfloor+1})+1},
\end{alignat*}
where the first inequality follows from $\frac{\lfloor s \rfloor}{s}\leq 1$ and $1+\frac{\lfloor s \rfloor}{s+\lfloor s \rfloor}$ being increasing in $\frac{\lfloor s \rfloor}{s}$ (see inequality (\ref{sUP})).
Similarly, according to our dual solution, the dual SPT constraints associated with all other jobs in Block 1 are set to zero. Consequently, for these remaining jobs, we obtain the inequality of $\beta \geq 1$, which is satisfied. Finally, we conclude that the solution $\beta = 3/2$, $y = 1$, $z_{n-\lfloor s \rfloor} =\frac{\lfloor s \rfloor}{s}\cdot\frac{\lfloor s \rfloor}{s+\lfloor s \rfloor}$, and $z_j = 0$ for all $j \in J$ is a feasible solution for  $Dual_{sum}$ that is valid for all possible values of $\tau^*$, $\tau$, and $s$.
\qed
\end{proof}
Consider that when $n \leq \lfloor s \rfloor$, the MFT algorithm schedules all jobs on the fastest machine, resulting in the dual variable associated with the SPT constraints being set to zero for all jobs. Henceforth, without loss of generality, we assume $n > \lfloor s \rfloor$.

The previous theorem establishes an upper bound of 3/2 for two machines. To demonstrate that we cannot improve this upper bound using $Dual_{sum}$, we present an example that illustrates the tightness of the bound for this method.

\begin{example}\label{ex1}
Consider the instance depicted in Figure \ref{eximg}. There are two jobs $j_1$ and $j_2$ with the processing times of $p_1$ and $p_2$, where $p_1\leq p_2$. Moreover, there are two identical machines with a speed of 1. Let $\tau = (2,2)$ be the vector of the Nash equilibrium and $\tau^*=(1,2)$ be the vector of the optimal schedule (see Figure \ref{eximg}). $LP_{sum}$ and $Dual_{sum}$ are presented in the following.
\begin{figure}[H]
    \centering
    \includegraphics[scale=0.5]{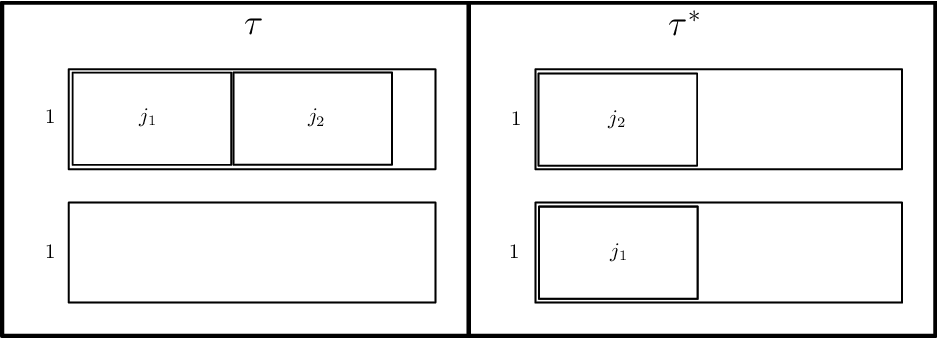}
    \caption{Schedules of Example \ref{ex1}.} \label{eximg}
    \begin{minipage}[t]{0.5\textwidth}
        \begin{align*}
            LP(\tau^*,\tau,s) = &\max \quad 2\cdot p_1 + p_2\\
            s.t. \quad &p_1 + p_2 = 1, \\
            &p_1\leq p_1,\\
            &p_1 - p_2 \leq 0, \\
            &-p_1 \leq 0.
        \end{align*}
    \end{minipage}%
    \begin{minipage}[t]{0.5\textwidth}
        \begin{align*}
            Dual(\tau^*,\tau,s) = &\min \quad \beta \\
             s.t. \quad &\beta + z_1 \geq 2, \\
            &\beta - z_1 \geq 1,\\
            &z_1 \geq 0.
        \end{align*}
    \end{minipage}
\end{figure}
By adding the two dual constraints, it is clear that $\beta \geq 3/2$. By setting $\beta =3/2$ and $z_1 =1/2$, we have a feasible (and thus optimal) solution that attains 3/2. Note that the weak LP does not impose a constraint for job $2$ and that is why it is not able to prove the actual price of anarchy of the example which is equal to 1.
\end{example}

In the previous analysis, for two machines, we established an upper bound of 3/2. Now, we extend the analysis for the general case of $m$ machines and we obtain a slightly weaker bound of $2-\frac{1}{2\cdot(2m-1)}$.

\begin{lemma}\label{floor}
Let $a, b \in \mathbb{R}$ with $a\geq b \geq 1$. Then, we have $\frac{a}{2b} \leq \lfloor \frac{a}{b} \rfloor$.
\end{lemma}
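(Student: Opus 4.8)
The plan is to reduce the two-variable claim to a statement about a single real variable by setting $x = a/b$, and then to exploit the elementary inequality $k+1 \le 2k$ that holds for every integer $k \ge 1$. Since the hypotheses give $a \ge b$ and $b \ge 1 > 0$, dividing yields $x = a/b \ge 1$, and the target inequality $\frac{a}{2b} \le \lfloor \frac{a}{b} \rfloor$ becomes exactly
\begin{equation*}
\frac{x}{2} \le \lfloor x \rfloor, \qquad x \ge 1.
\end{equation*}

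\textbf{Key steps.} First I would write $k = \lfloor x \rfloor$, which is an integer with $k \ge 1$ because $x \ge 1$, and record the defining inequalities $k \le x < k+1$. Next, the crucial observation is that for every integer $k \ge 1$ one has $k + 1 \le 2k$ (this is equivalent to $1 \le k$). Chaining this with the upper bound on $x$ gives
\begin{equation*}
x < k+1 \le 2k,
\end{equation*}
and dividing by $2$ produces $\frac{x}{2} < k = \lfloor x \rfloor$, which in particular establishes the desired (non-strict) inequality. Finally I would translate back to the original variables: substituting $x = a/b$ recovers $\frac{a}{2b} \le \lfloor \frac{a}{b} \rfloor$, completing the argument.

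\textbf{Main obstacle.} There is no genuine difficulty here; the only point requiring a moment of care is the role of the two separate hypotheses. The bound $a \ge b$ is what guarantees $x \ge 1$ (equivalently $k \ge 1$), and this is precisely the regime in which $k+1 \le 2k$ holds — for $0 < x < 1$ one would have $\lfloor x \rfloor = 0$ and the claim would fail, so the hypothesis $a \ge b$ cannot be dropped. The additional assumption $b \ge 1$ is needed only to ensure $b > 0$ so that the division defining $x$ is valid and order-preserving. I would make this dependence explicit so that the single-variable reduction is clearly justified.
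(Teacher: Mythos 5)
Your proof is correct, and it takes a cleaner route than the paper. The paper proves the lemma by a two-case analysis on the size of $a$ relative to $2b$: when $2b \le a$ it writes $\lfloor a/b \rfloor - \frac{a}{2b} = \frac{a}{2b} - x$ (with $x$ the fractional part of $a/b$) and uses $\frac{a}{2b} \ge 1 > x$; when $b \le a < 2b$ it observes $\frac{a}{2b} < 1 \le \lfloor a/b \rfloor$. Your argument dispenses with the case split entirely: setting $k = \lfloor a/b \rfloor \ge 1$, the chain $\frac{a}{b} < k+1 \le 2k$ (the second step being exactly $k \ge 1$) gives the conclusion in one line, and in fact yields the strict inequality $\frac{a}{2b} < \lfloor \frac{a}{b} \rfloor$, slightly more than the lemma asserts. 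Your remark that the hypothesis $b \ge 1$ is only used to guarantee $b > 0$ is also accurate — the lemma holds under the weaker assumption $a \ge b > 0$, which the paper's proof does not make visible. Both arguments are elementary and serve the lemma's role in the paper (bounding $\frac{s_{\tau^*_k}}{2 s_{r_k}}$ by $\lfloor \frac{s_{\tau^*_k}}{s_{r_k}} \rfloor$ in Theorem~\ref{thm4}) equally well; yours is simply the more economical write-up.
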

\begin{proof}
We can divide the analysis into two cases.

\textit{(1) $ 2b \leq a$}.

\begin{alignat*}{2}
\lfloor \frac{a}{b} \rfloor - \frac{a}{2b} = \frac{a}{b} - x - \frac{a}{2b} = \frac{a}{2b} - x \geq 0,
\end{alignat*}
where $0\leq x < 1$ is the decimal part of $\frac{a}{b}$. Additionally, building upon the given assumption, it is evident that $\frac{a}{2b} \geq 1$. Therefore, the last inequality holds.

\textit{(2) $ b \leq a < 2b$}.
\begin{alignat*}{2}
1 \leq \frac{a}{b} < 2 \Rightarrow \frac{1}{2} \leq \frac{a}{2b} < 1 \Rightarrow \frac{a}{2b} < \lfloor \frac{a}{b} \rfloor,
\end{alignat*}
where the last inequality holds since $\lfloor \frac{a}{b} \rfloor \geq 1$.
\qed
\end{proof}

\begin{lemma}\label{mftChar}
Let $k-1, k \in J$ be two consecutive jobs with $\tau^*_k\neq m$. Then, we have $\tau^*_{k-1} \neq \tau^*_k$.
\end{lemma}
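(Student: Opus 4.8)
The plan is to argue directly from the mechanics of the MFT algorithm (Algorithm~\ref{alg:MFTQ}) rather than from the optimality inequality~(\ref{eq:IEQopt}). The key observation is that MFT processes the jobs in the order $n, n-1, \ldots, 1$ and, at each step, assigns the current job to the machine minimizing the positional value $(\psi_i+1)/s_i$, breaking ties toward the highest index. Hence MFT is nothing but a merge of the $m$ increasing sequences $\{\,r/s_\ell : r \geq 1\,\}$, one per machine $\ell$, into a single non-decreasing sequence: the $r$-th smallest positional value is handed to job $n-r+1$. In particular, if job $j$ ends up on machine $\tau^*_j$, its positional value equals $(\psi_{\tau^*_j}(\tau^*,j)+1)/s_{\tau^*_j}$, and job $j$ occupies rank $n-j+1$ in this sorted sequence.

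First I would set up a proof by contradiction: suppose $\tau^*_{k-1}=\tau^*_k=i$ with $i\neq m$, so that $s_i\le s$. Because $k-1$ and $k$ are consecutive indices, MFT processes job $k$ immediately before job $k-1$, and since both land on machine $i$ they must occupy two consecutive positions $t$ and $t+1$ there. Thus job $k$ has positional value $t/s_i$ and job $k-1$ has positional value $(t+1)/s_i$, and their ranks $n-k+1$ and $n-k+2$ differ by exactly one. Consequently no positional value of any machine can lie strictly in the open interval $(t/s_i,(t+1)/s_i)$, for such a value would have to receive a rank strictly between $n-k+1$ and $n-k+2$.

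The contradiction then comes from the fastest machine $m$, whose positional values form the grid $\{\,r/s\,\}$ with spacing $1/s$. When $s_i<s$ the interval $(t/s_i,(t+1)/s_i)$ has length $1/s_i>1/s$, so it must contain some grid point $r/s$, i.e.\ a positional value of machine $m$ strictly between the two, which is impossible. The delicate boundary case $s_i=s$ (machine $i$ ties the top speed but $i\neq m$) is where the tie-breaking rule does the work and is, I expect, the main obstacle: here the two endpoints coincide with machine $m$'s values $t/s$ and $(t+1)/s$, so the interval argument degenerates. To handle it I would use that, for MFT to prefer the lower-indexed machine $i$ over machine $m$ when placing job $k-1$, machine $m$ must already have its position $t+1$ filled; but that position has value $(t+1)/s$ and was filled before job $k$, forcing it to a rank below $n-k+1$, which contradicts its value exceeding job $k$'s value $t/s$. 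Either way the assumption is untenable, so $\tau^*_{k-1}\neq\tau^*_k$, completing the proof.
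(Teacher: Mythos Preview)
Your argument is correct and reaches the same conclusion as the paper, but along a genuinely different route. The paper argues directly from the monotonicity of MFT: once job $k$ is placed on machine $i$, every machine $l$ satisfies $\psi_l(\tau^*,k)/s_l \le (\psi_i(\tau^*,k)+1)/s_i$; adding $1/s_l \le 1/s_i$ for $l\ge i$ yields $(\psi_l(\tau^*,k)+1)/s_l \le (\psi_i(\tau^*,k)+2)/s_i$, i.e.\ every faster machine offers job $k-1$ a positional value no worse than machine $i$ does, so the highest-index tie-breaking rule immediately forbids $\tau^*_{k-1}=i$. This is a three-line inequality chain that uniformly covers the case $s_i=s$ without any case split. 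Your merge-sort picture is more combinatorial: you exploit that two consecutive ranks leave no room for an intermediate value, and then produce such a value from the density of the grid $\{r/s\}$. This is a nice alternative viewpoint, but it costs you a separate treatment of the boundary $s_i=s$, and it silently uses that any positional value strictly below a consumed one is itself consumed (so that the grid point $r/s$ you exhibit really does receive a rank). Both proofs rest on the same monotonicity of MFT; the paper's version is shorter and case-free, while yours makes the ``sorted merge'' structure of MFT explicit.
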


\begin{proof}
Recall that the MFT algorithm first assigns job $k$ and then assigns job $k-1$. Assume that the MFT algorithm assigns job $k$ to machine $i$. As $k$ stands as the most recently scheduled job, its positional value cannot be smaller than previously scheduled jobs. Hence,
\begin{alignat*}{2}
\frac{\psi_{l}(\tau^*,k)}{s_l} \leq \frac{\psi_{i}(\tau^*,k)+1}{s_i}.\tag{$\forall l \in M$} 
\end{alignat*}
Thus, using the previous inequality, we observe that for all $l \geq i$,
\begin{alignat*}{2}
\frac{\psi_{l}(\tau^*,k)+1}{s_l} \leq \frac{\psi_{l}(\tau^*,k)}{s_l} + \frac{1}{s_i} \leq \frac{\psi_{i}(\tau^*,k)+2}{s_i}.
\end{alignat*}
The term on the left-hand side signifies the positional value of job $k-1$ on machine $\ell$ and the term on the right-hand side signifies the positional value of job $k-1$ on machine $i$. Consequently, we can infer that in the MFT algorithm, it is unfeasible to position two successive jobs adjacent to each other on the same machine, except for the fastest one. It is important to emphasize that this proof exclusively establishes the impossibility of placing two consecutive jobs on a single machine, as a more favorable positional value can consistently be found for the subsequent job among the higher-speed machines. Nonetheless, it's worth noting that the subsequent job could potentially be accommodated on a machine with a lower speed, contingent upon the positional value.
\qed
\end{proof}

\begin{theorem}\label{thm4}
The pure price of anarchy for $Q||\sum_{}^{}C_{j}$ is bounded above by $2-\frac{1}{2\cdot(2m-1)}$, where $m$ is the number of machines.
\end{theorem}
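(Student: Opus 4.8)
The plan is to exhibit an explicit feasible solution of $Dual_{sum}$ with objective value $\beta = 2-\frac{1}{2\cdot(2m-1)}$, following the template of Theorem \ref{thm3}. As there, I would fix $y=1$ so that the second term on the left of the deviation constraint cancels against the right-hand side, reducing the whole task to verifying inequality (\ref{eq:GIN}) for every job $k$. The only remaining freedom is in the SPT-multipliers $z_k$, which enter (\ref{eq:GIN}) through the telescoping difference $(z_{k-1}-z_k)\cdot\frac{s_{\tau^*_k}}{\psi_{\tau^*_k}(\tau^*,k)+1}$: a positive $z_k$ \emph{relaxes} the constraint of job $k$ but \emph{tightens} that of job $k+1$. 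The proof is therefore a bookkeeping argument: choose a nonnegative assignment of the $z_k$'s that keeps every constraint at or below $\beta$.

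First I would dispose of the easy jobs. If $k$ satisfies $\psi_{\tau^*_k}(\tau^*,k)\ge 1$, then the first term of (\ref{eq:GIN}) is at most $1/2$, while Lemma \ref{optINEQ} bounds $\frac{\psi_{\tau_k}(\tau^*,k)}{s_{\tau_k}}\cdot\frac{s_{\tau^*_k}}{\psi_{\tau^*_k}(\tau^*,k)+1}$ by $1$; hence with the surrounding $z$'s set to $0$ the constraint only requires $\beta\ge 3/2$, which is dominated by $2-\frac{1}{2\cdot(2m-1)}$ for all $m\ge 2$. The difficulty is thus concentrated on the \emph{critical jobs}, those with $\psi_{\tau^*_k}(\tau^*,k)=0$, i.e. the largest job MFT places on a machine. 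For such a job the first term equals $1$ and, in the worst case where the Nash equilibrium reroutes it to the fastest machine, the second term can be pushed up toward $1$ as well, which is exactly the obstruction that keeps the naive bound at $2$. Job $n$, and more generally any critical job on the fastest machine, is harmless since nothing of lower priority lies above it; the real work is taming the critical jobs of the $m-1$ slower machines.

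Here I would lean on the MFT block structure together with Lemmas \ref{mftChar} and \ref{floor}. By Lemma \ref{mftChar} no two consecutive jobs share a slow machine, so the critical job of a machine $l<m$ is the first (largest) job MFT assigns to $l$. Writing $f=\psi_m(\tau^*,k)$ for the number of jobs above it on the fastest machine, inequality (\ref{eq:IEQopt}) applied to $l$ and $m$ gives $\frac{f}{s}\le\frac{1}{s_l}$, i.e. $f\le s/s_l$, while Lemma \ref{floor} supplies the matching lower bound $f\ge\frac{s}{2s_l}$, pinning the critical term $\frac{f\,s_l}{s}$ into $[1/2,1]$. For a single critical job in isolation the balancing is then identical to the $m=2$ case: choosing $z_k$ to equalize the constraints of the critical job and of its successor on the fastest machine yields a bound of the shape $1+\frac{f}{f+s}$, which improves on $2$, and the lower bound on $f$ guarantees the successor has enough mass to absorb the correction cheaply.

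The genuine obstacle, and where I expect the bound to slip from $3/2$ to $2-\frac{1}{2\cdot(2m-1)}$, is that these critical jobs do not occur in isolation: the successor of one critical job may itself be the critical job of a faster machine, so the $z$-corrections chain through a window of up to $m-1$ critical positions near the top of the schedule, and the $z_k$ that relaxes one critical constraint simultaneously loads the next. The hardest step is then to choose the $z_k$'s consistently across this entire chain so that every constraint in it stays at or below $2-\frac{1}{2\cdot(2m-1)}$ at once. I would keep $z_k=0$ except on the block of slow-machine critical jobs and solve the resulting coupled system, invoking Lemma \ref{floor} at each link to bound the accumulated burden; summing the $m-1$ links is what I expect to produce the factor $2m-1$ in the denominator, with the worst case realized by the critical job of the slowest machine, whose second term is closest to $1$. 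Verifying that this distinguished constraint, after its $z$-correction, lands exactly at $2-\frac{1}{2\cdot(2m-1)}$ while none of the induced burdens on the intervening fast-machine jobs exceeds this value is the calculation I anticipate carrying the weight of the proof.
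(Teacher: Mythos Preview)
Your proposal is correct and follows essentially the same approach as the paper: set $y=1$, reduce to inequality~(\ref{eq:GIN}), and support nonzero $z_k$'s only on the chain(s) of consecutive non-fast-machine jobs emanating from the critical jobs, with the value stepping up by $\frac{1}{2s_{r}(2m-1)}$ at each critical job encountered, so that the accumulated burden of at most $m-1$ increments is discharged onto the first fast-machine job via Lemmas~\ref{floor} and~\ref{mftChar}. The paper's only refinement over your sketch is that these chains (which it calls \emph{sub-chains}) need not form a single contiguous block---each terminates as soon as it hits the fastest machine and a fresh one may start at the next uncovered critical job---but the bookkeeping is exactly what you describe.
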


\begin{figure}[!t]
\centering
\includegraphics[scale= 0.6]{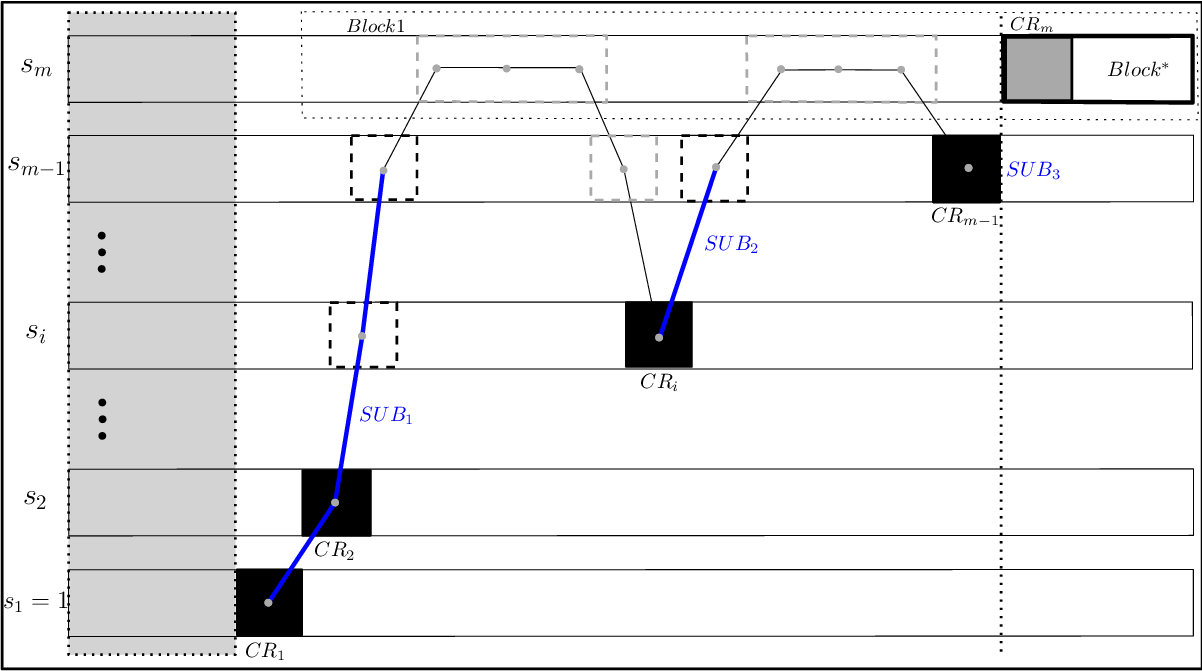}
\caption{A potential chain of consecutive jobs from $CR_1$ to $CR_m$ in an optimal schedule together with its sub-chains. Only jobs that are part of a sub-chain (blue lines) have a positive $z_j$-value.} \label{figblock3}
\end{figure}

\begin{proof}
Define the set of \textit{\textbf{critical jobs}} $CR$, one for each machine, where each $CR_i \in CR$ for $i=1,\ldots,m-1$ is the last job scheduled according to the optimal schedule on every machine $i$ and $CR_m\in CR$ is the smallest job in $Block^*$ (see Figure \ref{figblock3} for an illustration), where $Block^*$ contains all the consecutive largest jobs before the first job is assigned to the second fastest machine by the MFT algorithm. The significance of the last job scheduled on each machine becomes crucial in our analysis. 

We recursively define sub-chains as follows. Sub-chain 1, denoted by $SUB_1$, is the set of jobs $\{CR_1,CR_1+1,\ldots,j-1\}$, where $j=\min\{j'\geq CR_1\mid \tau^*_{j'}=m\}$. Notice that a sub-chain might contain multiple critical jobs. Then, sub-chain $i$, denoted by $SUB_i$, for $i\geq 2$ is the set of jobs $\{CR_{i'},CR_{i'}+1,\ldots,j-1\}$, where $CR_{i'}$ is the smallest indexed critical job that is not part of a previous sub-chain and $j=\min\{j'\geq CR_{i'}\mid \tau^*_{j'}=m\}$.
Let $SUB$ denote the set of all sub-chains and $\mathcal{SUB} = \bigcup SUB_i$ for $i\geq 1$. For $j \in \mathcal{SUB}$ denote by $r_j$ the machine index of the initial critical job within the sub-chain that includes job $j$.  We will show that
\begin{alignat*}{2}
    \beta &= 2 - \frac{1}{2 \cdot (2m - 1)}, \\
    y &= 1, \\
    z_j &=
    \begin{cases}
        i \cdot \frac{1}{2\cdot s_{r_j} \cdot(2m - 1)} & \text{, if }j\in\mathcal{SUB} \text{ and } CR_i \leq j < CR_{i+1} \text{ for some } i \in \{1, \ldots, m-1\}\\
        0 & \text{, otherwise}
    \end{cases}
\end{alignat*}
constitutes a feasible solution for $Dual_{sum}$, valid for all possible values of $s$, $\tau^*$, and $\tau$, and thus yields an upper bound for the price of anarchy. By this definition, only jobs $j$ that are part of a sub-chain will have a positive $z_j$ value, and the value depends on the sub-chain and on the position of that job between critical jobs within the sub-chain.
Similar to the proof of Theorem~\ref{thm3} we will now show that (\ref{eq:GIN}) holds for all jobs $k \in J$.

For jobs $k \in J$ with $k < CR_1$ (all the jobs in the shaded area of Figure \ref{figblock3}), we have that
\begin{alignat*}{2}
\beta\geq \frac{3}{2}\geq \frac{1}{\psi_{\tau^*_k}(\tau^*,k) + 1} + 1\geq \frac{1}{\psi_{\tau^*_k}(\tau^*,k) + 1} + \frac{\psi_{\tau_k}(\tau^*,k)}{\psi_{\tau^*_k}(\tau^*,k) + 1}\cdot \frac{s_{\tau^*_k}}{s_{\tau_k}},
\end{alignat*}
where the second inequality follows from $\psi_{\tau^*_k}(\tau^*,k)\geq 1$, and the third inequality from inequality (\ref{eq:IEQopt}). Since $z_{k-1}-z_k=0$, inequality (\ref{eq:GIN}) is satisfied for all $k < n - \lfloor s \rfloor$.

For all $k$ with $CR_1 < k < CR_m$, $\tau^*_k \neq m$, and $k\neq CR_i$ for $i=2,\ldots,m-1$, we have that
\begin{alignat*}{2}
\beta\geq \frac{3}{2}\geq \frac{1}{\psi_{\tau^*_k}(\tau^*,k) + 1} + 1\geq \frac{1}{\psi_{\tau^*_k}(\tau^*,k) + 1} + \frac{\psi_{\tau_k}(\tau^*,k)}{\psi_{\tau^*_k}(\tau^*,k) + 1}\cdot \frac{s_{\tau^*_k}}{s_{\tau_k}},
\end{alignat*}
where the second inequality follows from $\psi_{\tau^*_k}(\tau^*,k)\geq 1$, and the third inequality from inequality (\ref{eq:IEQopt}). Since $z_{k-1}-z_k=0$ for all these jobs, inequality (\ref{eq:GIN}) is satisfied. It is important to note that the dual variables of the SPT constraints for these jobs are non-zero if they are in a sub-chain and zero otherwise. 

Our focus now shifts to verifying the feasibility of our dual fitting concerning the dual constraints of both the critical jobs and the jobs scheduled on the fastest machine. Based on the definition of a sub-chain, each critical job can either serve as the starting point of a sub-chain or reside within a sub-chain originating from a prior critical job.

\begin{lemma}\label{minJobSUB}
Consider job $j$ with $\tau^*_j=m$ where job $j-1$ is the end-point of the sub-chain $SUB_i$. Then $\psi_{m}(\tau^*,j)+1  = \lfloor \frac{s_m}{s_i} \rfloor$.
\end{lemma}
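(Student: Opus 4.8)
The plan is to read the claim directly off the greedy dynamics of the MFT algorithm, exploiting that the starting critical job $CR_i$ of the sub-chain is the \emph{largest} job placed on machine $i$ (it is the last job to complete there), so that $\psi_i(\tau^*,CR_i)=0$. First I would record the two structural facts to be used. Because $CR_i$ is the largest job on machine $i$ and $j>CR_i$, machine $i$ is still empty at the moment MFT processes job $j$, so its current positional value is exactly $1/s_i$. Likewise, by the definition of the sub-chain none of the jobs $CR_i,CR_i+1,\ldots,j-1$ is placed on machine $m$ while $\tau^*_j=m$, so a direct count gives $\psi_m(\tau^*,CR_i)=\psi_m(\tau^*,j)+1$.

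Next I would sandwich $1/s_i$ between two consecutive positional values of machine $m$. On one side, MFT assigned job $j$ to machine $m$ rather than to the empty machine $i$; since ties in the MFT rule are broken toward the higher-index machine and $m>i$, this only yields the weak inequality $\tfrac{\psi_m(\tau^*,j)+1}{s_m}\le \tfrac{1}{s_i}$. On the other side, when MFT later processes $CR_i$ it sends it to machine $i$ and \emph{not} to machine $m$, whose positional value at that moment is $\tfrac{\psi_m(\tau^*,CR_i)+1}{s_m}=\tfrac{\psi_m(\tau^*,j)+2}{s_m}$; because $i<m$, machine $i$ can only win this comparison if its value is \emph{strictly} smaller, giving $\tfrac{1}{s_i}<\tfrac{\psi_m(\tau^*,j)+2}{s_m}$.

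Combining the two bounds and multiplying through by $s_m$ gives $\psi_m(\tau^*,j)+1\le \tfrac{s_m}{s_i}<\psi_m(\tau^*,j)+2$. Since $\psi_m(\tau^*,j)+1$ is an integer, this pins down $\lfloor s_m/s_i\rfloor=\psi_m(\tau^*,j)+1$, which is exactly the claim. (As a consistency check, for $m=2$ the sub-chain $SUB_1$ reduces to the single critical job $n-\lfloor s\rfloor$, job $j$ is the smallest job of Block~1, and the formula returns $\psi_2(\tau^*,j)+1=\lfloor s\rfloor=\lfloor s_m/s_1\rfloor$, matching the analysis of Theorem~\ref{thm3}.)

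The point I expect to be most delicate is the tie-breaking bookkeeping: the whole argument hinges on the asymmetry that the higher-index machine $m$ wins ties, which makes the first inequality weak ($\le$) but the second strict ($<$). Having exactly one strict inequality is what collapses the two-sided estimate into a single integer value of the floor, so reversing the tie-break convention would break the lemma. A secondary point to justify carefully is that machine $i$ is genuinely empty while MFT processes every job of index larger than $CR_i$ — this is precisely where I use that $CR_i$ is the \emph{last} (largest) job on machine $i$ rather than merely some job placed on it.
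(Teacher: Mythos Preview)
Your proof is correct and follows essentially the same approach as the paper: both arguments identify $\psi_m(\tau^*,CR_i)=\psi_m(\tau^*,j)+1$ from the sub-chain definition, and then sandwich $1/s_i$ between the positional values that machine $m$ offers when MFT processes job $j$ and when it processes $CR_i$. Your treatment of tie-breaking is in fact more careful than the paper's: the paper writes both comparisons as weak inequalities and obtains $\tfrac{s_m}{s_i}-1\le \psi_m(\tau^*,CR_i)\le \tfrac{s_m}{s_i}$, which does not by itself pin down the value when $s_m/s_i$ is an integer, whereas your use of the ``highest index wins ties'' rule makes the $CR_i$-side inequality strict and closes that edge case cleanly.
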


\begin{proof}
The sub-chain $SUB_i$ originates from the critical job $CR_i$ on machine $i$. Since $j - 1$ is the end-point of the sub-chain, we have $ \psi_{m}(\tau^*,j)+1 =\psi_{m}(\tau^*,CR_i)$. Based on the MFT algorithm we have for job $j$ and $CR_i$, respectively,
\begin{align*}
        &\frac{\psi_{m}(\tau^*,CR_i)}{s_m} \leq \frac{1}{s_i}, \\
        &\frac{1}{s_i} \leq \frac{\psi_{m}(\tau^*,CR_i) + 1}{s_m}.
\end{align*}
By using both inequalities we have

\begin{alignat*}{2}
    \frac{s_m}{s_i} - 1 \leq \psi_{m}(\tau^*,CR_i) \leq \frac{s_m}{s_i}.
\end{alignat*}
Using $\frac{s_m}{s_i} - 1 \leq \lfloor \frac{s_m}{s_i} \rfloor \leq \frac{s_m}{s_i}$, and the fact that both $\psi_{m}(\tau^*,CR_i)$ and $\lfloor \frac{s_m}{s_i} \rfloor$ are integers, we can conclude
\begin{alignat*}{2}
\psi_{m}(\tau^*,CR_i) = \lfloor \frac{s_m}{s_i} \rfloor.
\end{alignat*}\qed
\end{proof}

For $k \in CR$ with $k \neq CR_m$, we have that
\begin{alignat*}{2}
\beta = 2-\frac{1}{2\cdot(2m-1)} &\geq 
2 + (z_{k-1}- z_k)\cdot s_{\tau^*_k} \\
&\geq \frac{1}{\psi_{\tau^*_k}(\tau^*,k) + 1} + \frac{\psi_{\tau_k}(\tau^*,k)}{\psi_{\tau^*_k}(\tau^*,k) + 1}\cdot \frac{s_{\tau^*_k}}{s_{\tau_k}}+(z_{k-1}- z_k)\cdot s_{\tau^*_k},
\end{alignat*}
using $z_{k-1}- z_k=-\frac{1}{2\cdot s_{r_k}\cdot(2m-1)}$, $s_{\tau^*_k}\geq s_{r_k}$, $\psi_{\tau^*_k}(\tau^*,k)=0$, and inequality (\ref{eq:IEQopt}).
Hence, inequality (\ref{eq:GIN}) is satisfied for all $k \in CR$ with $k \neq CR_m$. Now, let us focus on analyzing the jobs in the fastest machine in the MFT algorithm. Let $F$ be the set of all jobs $k$ that are allocated to the fastest machine and where job $k-1$ belongs to a sub-chain. For $k \in F$ where $k$ is not in $Block^*$, using the fact that $\psi_{\tau^*_k}(\tau^*,k)\geq 1$, and inequality (\ref{eq:IEQopt}) we have
\begin{alignat*}{2}
\beta &\geq \frac{3}{2} +(z_{k-1}- z_{k})\cdot \frac{s_{\tau^*_{k}}}{\psi_{\tau^*_{k}}(\tau^*,{k})+1}\\
&\geq \frac{1}{\psi_{\tau^*_k}(\tau^*,k) + 1} + \frac{\psi_{\tau_k}(\tau^*,k)}{\psi_{\tau^*_k}(\tau^*,k) + 1}\cdot \frac{s_{\tau^*_k}}{s_{\tau_k}}+(z_{k-1}- z_{k})\cdot \frac{s_{\tau^*_{k}}}{\psi_{\tau^*_{k}}(\tau^*,{k})+1}.
\end{alignat*}
Moreover, for $k \in F$ where $k$ is in the $Block^*$, using the same analysis as the case of two machines we have
\begin{alignat*}{2}
\beta & \geq 1 +(z_{k-1}- z_k)\cdot \frac{s_{\tau^*_k}}{\psi_{\tau^*_k}(\tau^*,k)+1}\\ &= \frac{1}{\psi_{\tau^*_k}(\tau^*,k)+1} + \frac{\psi_{\tau^*_k}(\tau^*,k)}{\psi_{\tau^*_k}(\tau^*,k)+1} \cdot \frac{s_{\tau^*_k}}{s_{\tau^*_k}} +(z_{k-1}- z_k)\cdot \frac{s_{\tau^*_k}}{\psi_{\tau^*_k}(\tau^*,k)+1}.
\end{alignat*}
Hence, the former inequality provides a more rigorous constraint and we consider that for the jobs in the fastest machine. Thus, we have
\begin{alignat*}{2}
\beta= 2-\frac{1}{2\cdot(2m-1)} &= \frac{3}{2} + \frac{m-1}{2\cdot s_{r_k}\cdot(2m-1)}\cdot (2\cdot s_{r_k})\\
& \geq \frac{3}{2} + z_{k-1} \cdot 2\cdot s_{r_k} \geq \frac{3}{2} +(z_{k-1}- z_{k})\cdot \frac{s_{\tau^*_{k}}}{\psi_{\tau^*_{k}}(\tau^*,{k})+1},
\end{alignat*}
where the first inequality follows from the fact that maximum $(m-1)$ critical jobs can cause increasing the value of the dual variable related to the SPT constraints and $r_k = r_{k-1}$ since they both are in the same sub-chain, and the second inequality is based on Lemma \ref{floor}, Lemma \ref{minJobSUB}, and the fact that $\psi_{\tau^*_{k}}(\tau^*,{k})+1 = \lfloor \frac{s_{\tau^*_{k}}}{s_{r_k}} \rfloor$. 

Hence, inequality (\ref{eq:GIN}) is satisfied for $k \in F$. For the remaining jobs in the fastest machine, since the dual variables related to the SPT constraints are zero, we obtain the inequality of $\beta \geq \frac{3}{2}$, which is satisfied. We conclude that the introduced dual fitting is a feasible solution for the $Dual_{sum}$ that is valid for all possible values of $\tau^*$, $\tau$, and $s$. 
\qed
\end{proof}

If for every pair of machines $i$ and $i^{'}$ belonging to $M$, where $s_i \geq s_{i^{'}}$, the remainder of the division $\frac{s_i}{s_{i^{'}}}$ is zero, then the configuration is referred to as having divisible speeds.

\begin{theorem}\label{thm:div}
In the case of machines having divisible speeds, the upper bound for the pure price of anarchy in the scheduling game $Q||\sum_{}^{}C_{j}$, is equal to $2-\frac{1}{2\cdot m}$, where $m$ is the number of machines. 
\end{theorem}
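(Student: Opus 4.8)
The plan is to reuse, essentially verbatim, the entire combinatorial scaffolding of the proof of Theorem~\ref{thm4} — the block structure of the optimal MFT schedule, the set of critical jobs $CR$, the decomposition into sub-chains $SUB_i$, the set $\mathcal{SUB}$, and the index $r_j$ of the machine carrying the first critical job of the sub-chain containing $j$ — and to change only the dual fitting. The whole point of the divisibility hypothesis is that it collapses the single inequality where the general argument loses a factor of two, namely the appeal to Lemma~\ref{floor}. Under divisible speeds $\lfloor s_m/s_i\rfloor = s_m/s_i$, so the conclusion of Lemma~\ref{minJobSUB} holds as the \emph{exact} identity $\psi_m(\tau^*,j)+1 = s_m/s_i$. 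Accordingly I would set $\beta = 2 - \frac{1}{2m}$, $y=1$, and
\begin{equation*}
 z_j = \begin{cases} i\cdot \frac{1}{2m\cdot s_{r_j}} & \text{if } j\in\mathcal{SUB} \text{ and } CR_i \leq j < CR_{i+1} \text{ for some } i\in\{1,\ldots,m-1\},\\ 0 & \text{otherwise,}\end{cases}
\end{equation*}
which is precisely the previous fitting with the denominator $2\cdot s_{r_j}\cdot(2m-1)$ enlarged to $2m\cdot s_{r_j}$; divisibility is exactly what lets us afford this larger increment per critical job.

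Next I would check inequality~(\ref{eq:GIN}) for each job class as before. For $k<CR_1$ and for the intermediate jobs with $CR_1<k<CR_m$ that are neither critical nor on the fastest machine, nothing changes: there $z_{k-1}-z_k=0$, and since $\beta = 2-\frac{1}{2m}\geq \frac{3}{2}$ for all $m\geq 1$, the same chain of inequalities (using $\psi_{\tau^*_k}(\tau^*,k)\geq 1$ and (\ref{eq:IEQopt})) goes through. For a critical job $k\in CR$ with $k\neq CR_m$ I would again use $\psi_{\tau^*_k}(\tau^*,k)=0$ and (\ref{eq:IEQopt}) to reduce (\ref{eq:GIN}) to $\beta \geq 2 + (z_{k-1}-z_k)\cdot s_{\tau^*_k}$; now $z_{k-1}-z_k = -\frac{1}{2m\cdot s_{r_k}}$ and $s_{\tau^*_k}\geq s_{r_k}$ give a right-hand side of at most $2-\frac{1}{2m}=\beta$, so the constraint is satisfied.

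The crucial step is the analysis of the jobs on the fastest machine, where divisibility does the work. For $k\in F$ outside $Block^*$, the exact form of Lemma~\ref{minJobSUB} yields $\psi_{\tau^*_k}(\tau^*,k)+1 = \frac{s_m}{s_{r_k}}$, hence $\frac{s_{\tau^*_k}}{\psi_{\tau^*_k}(\tau^*,k)+1} = s_{r_k}$ \emph{without} any recourse to Lemma~\ref{floor} and thus without the factor of two. Using $z_k=0$, the fact that at most $m-1$ critical jobs contribute to $z_{k-1}$, and $r_{k-1}=r_k$ along the sub-chain, I would conclude
\begin{equation*}
 \frac{3}{2} + (z_{k-1}-z_k)\cdot \frac{s_{\tau^*_k}}{\psi_{\tau^*_k}(\tau^*,k)+1} \leq \frac{3}{2} + (m-1)\cdot \frac{1}{2m\cdot s_{r_k}}\cdot s_{r_k} = \frac{3}{2} + \frac{m-1}{2m} = 2 - \frac{1}{2m} = \beta,
\end{equation*}
so (\ref{eq:GIN}) holds, with equality in the worst case. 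The $Block^*$ jobs and the remaining fastest-machine jobs obey the weaker requirements $\beta\geq 1$ and $\beta\geq \frac{3}{2}$ exactly as in Theorem~\ref{thm4}, both implied by $\beta = 2-\frac{1}{2m}$.

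I expect the only genuine obstacle to be bookkeeping rather than new ideas: one must confirm that Lemma~\ref{minJobSUB} indeed specializes to the stated equality under divisibility, and that the structural facts carried over from Theorem~\ref{thm4} — namely $s_{\tau^*_k}\geq s_{r_k}$ for critical jobs and $r_{k-1}=r_k$ for a fastest-machine job immediately following a sub-chain — remain valid, so that the factor-two saving propagates cleanly into the fastest-machine bound. Everything else is a transcription of the proof of Theorem~\ref{thm4} with the constant $2m-1$ replaced by $m$.
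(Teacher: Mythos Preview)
Your proposal is correct and follows essentially the same approach as the paper: the same dual fitting with $2m-1$ replaced by $m$, the same case analysis inherited from Theorem~\ref{thm4}, and the same key observation that under divisible speeds Lemma~\ref{minJobSUB} gives the exact equality $\psi_m(\tau^*,k)+1 = s_m/s_{r_k}$, rendering Lemma~\ref{floor} unnecessary and saving the factor of two. The paper's own proof is slightly terser but matches yours step for step, including the computations for critical jobs and for $k\in F$.
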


\begin{proof}
    As mentioned earlier, a configuration is referred to as having divisible speeds if for every pair of machines $i, i^{'} \in M$, where $s_i \geq s_{i^{'}}$, the remainder of the division $\frac{s_i}{s_{i^{'}}}$ is zero. We will show that the following dual fitting constitutes a feasible solution for $Dual_{sum}$, valid for all possible values of $s$, $\tau^*$, and $\tau$, and thus yields an upper bound of $2-\frac{1}{2\cdot m}$ for the price of anarchy. Here $\mathcal{SUB}$ and $CR_i$ are defined as in the proof of Theorem~\ref{thm4}.
\begin{alignat*}{2}
    \beta &= 2 - \frac{1}{2 \cdot m}, \\
    y &= 1, \\
    z_j &=
    \begin{cases}
        i \cdot \frac{1}{2\cdot s_{r_j} \cdot m} & \text{, if }j\in\mathcal{SUB} \text{ and } CR_i \leq j < CR_{i+1} \text{ for some } i \in \{1, \ldots, m-1\}\\
        0  & \text{, otherwise } 
    \end{cases}
\end{alignat*}
 Similar to the proof of Lemma $\ref{minJobSUB}$, let $F$ be the set of all jobs $k$ in the fastest machine that job $k-1$ belongs to a sub-chain. Building upon our previous analysis and considering our proposed solution, we find that for jobs $k \in J$ where $k < CR_1$, for jobs $k$ with $CR_1 < k < CR_m$, $\tau^*_k \neq m$, and $k\neq CR_i$ for $i=2,\ldots,m-1$, and for jobs $k$ in the fastest machine with $k \notin F$, a bound of $\beta \geq 3/2$ is established, thereby satisfying inequality (\ref{eq:GIN}). For $k \in CR$ with $k \neq CR_m$, by using inequality (\ref{eq:IEQopt}), $s_{\tau^*_k}\geq s_{r_k}$, and the fact that $\psi_{\tau^*_k}(\tau^*,k)=0$ we have that
\begin{alignat*}{2}
\beta= 2-\frac{1}{2\cdot m} &\geq 
2 + (z_{k-1}- z_k)\cdot s_{\tau^*_k} \\
&\geq 
\frac{1}{\psi_{\tau^*_k}(\tau^*,k) + 1} + \frac{\psi_{\tau_k}(\tau^*,k)}{\psi_{\tau^*_k}(\tau^*,k) + 1}\cdot \frac{s_{\tau^*_k}}{s_{\tau_k}}+(z_{k-1}- z_k)\cdot s_{\tau^*_k},
\end{alignat*}
where the first inequality is based on our solution, $z_{k-1}- z_k=-\frac{1}{2 \cdot m}$. For $k \in F$, employing the identical analysis as presented in the preceding theorem, we arrive at the following
\begin{alignat*}{2}
\beta= 2-\frac{1}{2\cdot m} &= \frac{3}{2} + \frac{m-1}{2\cdot s_{r_k}\cdot m}\cdot s_{r_k}\\ 
& \geq \frac{3}{2} + z_{k-1} \cdot s_{r_k} = \frac{3}{2} +(z_{k-1}- z_{k})\cdot \frac{s_{\tau^*_{k}}}{\psi_{\tau^*_{k}}(\tau^*,{k})+1},
\end{alignat*}
The initial inequality stems from the observation that a maximum of $(m-1)$ critical jobs can influence the increase in the value of the dual variable related to the SPT constraints and $r_k = r_{k-1}$ since they both are in the same sub-chain. The third equality is grounded in the assumption of machines' divisible speeds, resulting in $\psi_{\tau^*_{k}}(\tau^*,{k})+1 = \frac{s_{\tau^*_{k}}}{s_{r_k}}$. Hence, inequality (\ref{eq:GIN}) is satisfied for $k \in F$. For the remaining jobs in the fastest machine, since the dual variables related to the SPT constraints are zero,  we obtain the inequality of $\beta \geq \frac{3}{2}$, which is satisfied. We conclude that the introduced dual fitting is a feasible solution for $Dual_{sum}$ that is valid for all possible values of $\tau^*$, $\tau$, and $s$.\qed
\end{proof}

\section{Conclusion and Future Research Ideas}

In this study, we tackled a simple yet challenging problem of related machine scheduling games. We made our contribution by presenting an improved upper bound of $2-1/(2\cdot(2m-1))$ for the price of anarchy, which is strictly below 2 for any fixed number of machines. This bound is also valid as an approximation factor for the Ibarra-Kim algorithm. Our analysis was motivated by the gap between the lower bound example with a price of anarchy of $\frac{e}{e-1}$ and an upper bound of $2$ as shown by Hoeksma and Uetz \cite{hoeksma2019price}. We extended the work of Brokkelkamp \cite{closePHD}, who demonstrated that using the weak LP the bound cannot be lower than 5/3. We conjecture that a further improvement can be achieved by identifying a dual fitting for the stronger version of the LP in \cite{closePHD} that considers all possible deviations of a job. Unfortunately, the summation technique yields weaker bounds for this version of the LP and this makes it hard to find a dual fitting.
%
%
%
\bibliographystyle{splncs04}

\end{document}